\xpatchcmd\ALG@step{\arabic{ALG@line}}{\fmtlinenumber{ALG@line}}{}{}
\let\fmtlinenumber\arabic 
\newcommand{\argmin}{\text{argmin}}
\newcommand{\argmax}{\text{argmax}}
\newcommand{\base}{y}
\newtheorem{thm}{Theorem}[section]
\newtheorem{lem}[thm]{Lemma}
\newtheorem{prop}[thm]{Proposition}
\newtheorem{defn}[thm]{Definition}
\theoremstyle{remark}
\newtheorem{rk}[thm]{Remark}
\theoremstyle{plain}
\theoremstyle{remark}
\newcommand{\ignore}[1]{}
\title{  Identifying Approximate Minimizers under  Stochastic Uncertainty}
\author{Hessa Al-Thani\thanks{Funded by Graduate Sponsorship Research Award from the Qatar Research and Development Institute.} \and Viswanath Nagarajan \thanks{Department of Industrial and Operations Engineering, University of Michigan, Ann Arbor, USA. Research supported in part by NSF grant CCF-2418495.} }
\date{}
\begin{document} 

\maketitle

\begin{abstract}
    We study a fundamental stochastic selection problem involving  $n$ independent random variables, each of which can be queried     at some cost.  Given a  tolerance level $\delta$, 
     the goal is to find a value that is $\delta$-approximately minimum (or maximum) over all the random variables, at  minimum expected cost. A solution to this problem is an adaptive sequence of queries, where the choice of the next query may depend on previously-observed values.     Two  variants arise, depending on whether the goal is to find a $\delta$-minimum value or a $\delta$-minimizer.    When all query costs are uniform, we provide a $4$-approximation algorithm for both variants.  When query costs are non-uniform, we provide a $5.83$-approximation algorithm for   the $\delta$-minimum value and a $7.47$-approximation for  the $\delta$-minimizer. All our algorithms rely on non-adaptive policies (that perform a fixed sequence of queries), so we also upper bound the corresponding ``adaptivity'' gaps. Our analysis  relates the stopping probabilities in the algorithm and  optimal policies, where a key step is in proving and using certain stochastic dominance properties.  
\end{abstract}

\section{Introduction}

We study a natural stochastic selection problem that involves querying a set of random variables so as to 
identify  their minimum (or maximum)  value   within a desired precision. Consider a car manufacturer who  wants to chose one design from $n$ options so as to  optimize  some attribute (e.g.,  maximum velocity or energy efficiency). Each  option $i$ corresponds to an  attribute value $X_i$ which is uncertain and drawn from  a known probability distribution. It is  possible to  determine the {\em exact} value of   $X_i$ by  further  testing--- but this incurs some cost $c_i$. Identifying the exact minimum (or maximum) value among   the $X_i$s might be too expensive. Instead, our goal is to identify an {\em  approximately}  minimum (or maximum) value, within a prescribed tolerance level. For example, we might be satisfied with  a value (and  corresponding option) that is within  $10$\% of  the true minimum.  The objective is to  minimize the expected cost.  In this paper, we provide the first  constant-factor approximation algorithm for this problem.

Our problem is related to two lines of work:  {\em stochastic combinatorial optimization} and  {\em optimization under explorable uncertainty}. In stochastic combinatorial optimization, a solution   makes  selections incrementally and adaptively (i.e., the next selection can depend on previously observed random outcomes). An optimal solution here may even require exponential  space to describe. Nevertheless, there has been much recent success in obtaining (efficient) approximation algorithms for such problems, see  e.g., \cite{DGV08,GuhaM07,BGLMNR12,GuptaKNR15,GkenosisGHK18,INZ12,JiangLL020,HellersteinKP21,HellersteinLS24}.  
Optimization problems under explorable uncertainty  involve querying   values drawn from known intervals in order to identify a minimizer. Typically, these results  focus on the {\em competitive ratio}, which relates the algorithm's (expected) query cost to the  optimum query-cost in hindsight, see e.g.,~\cite{kahan1991model,CHAPLICK2021,feder2000computing,MSTerlebach2008,MSTmegow2017randomization,erlebach2016query,BampisDELMS21,megow2023set}. In particular, for the problem of finding an {\em exact} minimizer among $n$ intervals,  
 \cite{kahan1991model} obtained a 2-competitive algorithm  in the adversarial setting  and \cite{CHAPLICK2021} obtained a $1.45$-approximation algorithm in the stochastic setting.  
 The problem we study is a significant generalization of the stochastic exact minimizer problem \cite{CHAPLICK2021}.

\subsection{Problem Definition}
\label{sec:prob}
\def\mv{{\sf MIN}}
\def\vv{{\sf VAL}}
\def\smq{\ensuremath{{\sf SMQ}}\xspace}
\def\smqi{\ensuremath{{\sf SMQI}}\xspace}

In the {\em stochastic minimum query} (\smq) problem, there are $n$ independent discrete random variables  $X_1,...,X_n$  that lie in intervals $I_1,...,I_n$ respectively. The random variables (r.v.s)  may be negative.  We assume that each interval is bounded and closed, i.e., $I_j=[\ell_j,r_j]$ for each $j\in [n]$. We also assume (without loss of generality) that each r.v. has non-zero probability at the endpoints of its interval, i.e., $\Pr[X_j=\ell_j]>0$ and $\Pr[X_j=r_j]>0$  for each $j\in [n]$.\footnote{Otherwise, we can just work with a smaller interval representing the same r.v.}  We will use the terms  random variable (r.v.) and interval interchangeably. The exact value of any  r.v. $X_j$ can only be determined by querying it, which incurs some  cost $c_j\ge 0$. Additionally, we are  given a ``precision'' value $\delta\ge 0$, where  the goal is to identify the minimum value over {\em all} r.v.s up to an additive precision of $\delta$. Formally, if $\mv = \min_{j =1}^n X_j$ then we want to find a  deterministic value $\vv$ such that $\mv \le \vv\le \mv+\delta$. Such a  value $\vv$ is called a {\em $\delta$-minimum} value.  The objective in  \smq is to minimize the expected cost of the queried intervals. 
Note that it may be sufficient to probe only a (small) subset of intervals before stopping. 

We also consider a related, but harder, problem where the goal is to {\em identify} some    $\delta$-minimizer    $i^*\in [n]$, i.e., an interval that   satisfies $X_{i^*}\le \mv+\delta$. We refer to this problem as {\em stochastic minimum query for identification} (\smqi).  If a $\delta$-minimum value is found then it also provides a  $\delta$-minimizer (see \S\ref{subsec:prelim}). However, the converse is not true. So, an \smqi  solution  may return an un-queried a  $\delta$-minimizer   $i^*$ without determining   a  $\delta$-minimum value.

Although our formulation above  uses  {\em additive}  precision (we aim to find a value that is at most $\mv+\delta$), we can also handle {\em multiplicative} precision where the goal is to find a value that is  at most $\alpha \cdot \mv$. This just requires  a simple  logarithmic transformation; see Appendix~\ref{app:multiplicative}. We can also handle the goal of finding the {\em maximum} value by working with negated r.v.s $\{-X_i\}_{i=1}^n$.  

Throughout, we  use $N:=[n]=\{1,2,\dots, n\}$ to denote the index set of the r.v.s.

\paragraph{Adaptive and Non-adaptive policies} Any solution to \smq involves querying  r.v.s sequentially until a   $\delta$-minimum value is found. In general, the sequence  of queries may depend on the realizations of previously queried r.v.s. We refer to such solutions as {\em adaptive} policies. Formally, such a solution can be  described  as a decision tree where each node corresponds to the next r.v. to query and the branches out of a node represent the  realization of the queried r.v. 
{\em Non-adaptive} policies are a special class of solutions where the sequence of queries is fixed upfront: the policy then performs queries in this order until a   $\delta$-minimum value is found. A central notion in stochastic optimization is the {\em adaptivity gap}~\cite{DGV08}, which is the worst-case ratio between the optimal non-adaptive value and the optimal adaptive value.  All our algorithms produce non-adaptive policies and hence also bound the adaptivity gap.

\subsection{Results}
Our first result is on the \smq problem with unit costs, for which we provide a $4$-approximation algorithm. Moreover, we achieve this result via a non-adaptive policy, which also proves an upper bound of $4$ on the adaptivity gap. This algorithm relies on combining two natural policies. The first policy simply queries the r.v. with the smallest left-endpoint. The second policy queries the r.v. that maximizes the probability of stopping in the very next step. When used in isolation, both  these policies  have unbounded approximation ratios. However, interleaving the two policies leads to a constant-factor approximation algorithm. 

We also consider the (harder) unit-cost \smqi problem and show that the  same policy leads to a $4$-approximation  algorithm: the only change is in the criterion to stop, which is now more relaxed. While the algorithm is the same as \smq, the analysis for \smqi is significantly more complex due to the new stopping criterion, which allows us to infer a $\delta$-minimizer $i^*$ even when it has not been queried. Specifically, we prove a stochastic dominance property between r.v.s in our algorithm and the optimum (conditioned on the \smq stopping criterion not occurring), and use this  in  relating the \smqi stopping-probability   in the algorithm and the optimum.       

Our next result is for the \smq problem with non-uniform costs. We obtain a constant-factor approximation again, with a slightly worse ratio of $5.83$.  This is based on combining ideas from the unit-cost algorithm with a ``power-of-two'' approach. In particular, the algorithm proceeds in several iterations, where the $i^{th}$ iteration incurs cost roughly $2^i$. In each iteration $i$, the algorithm selects a subset of r.v.s with cost $O(2^i)$ based on the following two criteria (i) smallest left-endpoint and (ii) maximum  probability of stopping in one step. In order to select the r.v.s for criterion (ii) we need to use a PTAS for an appropriate version of the knapsack problem. 

Finally, we consider  the \smqi problem with non-uniform costs.  Directly using the \smq algorithm for \smqi (as in the unit-cost case) does not work here: it leads to a poor approximation ratio. However, a modification of the \smq algorithm works. Specifically, we modify step (i) above: instead of just selecting a prefix of intervals with the smallest 
left-endpoints, we select an ``almost prefix'' set by skipping some expensive intervals.  We show that this approach leads to an approximation ratio of  $7.47$, which is slightly worse than what we obtain for \smq.  The analysis combines aspects of unit-cost \smqi  
and \smq with non-uniform costs.

\subsection{Related Work}\label{subsec:related}
Computing an approximately  minimum or maximum value by querying a set of random variables is a central question in stochastic optimization. Most of the prior works on this topic have   focused on {\em budgeted} variants. Here, one wants to select a subset of queries of total cost within some budget so as to maximize or minimize the value {\em among the queried} r.v.s. 
The results for the minimization and maximization versions are drastically different. A $1-\frac1e$ approximation algorithm for the budgeted max-value problem   follows from results on stochastic submodular maximization~\cite{AsadpourN16}; more complex ``budget'' constraints can also be handled in this setting \cite{AdamczykSW16,GuptaNS17}. These results also bound the adaptivity gap. In addition,  PTASes are known for  non-adaptive and adaptive versions of budgeted max-value~\cite{FuLX18,SegevS21}.  For the budgeted min-value problem, it is known that the adaptivity gap is unbounded and results for the non-adaptive and adaptive versions  are based on entirely different techniques. \cite{goel2010probe} obtained a bi-criteria approximation algorithm for the non-adaptive problem (the queried subset must be fixed upfront) that achieves a $1+\epsilon$ approximation to the optimal value while exceeding the budget by at most an $O(\log\log m)$ factor, where each r.v. takes an integer value in the range $\{0,1,\dots, m\}$. Subsequently, \cite{wang2022probing} studied the adaptive setting (the queried subset may depend on observed realizations) and obtained a $4$-approximation  while exceeding the budget by at most an $O(\log\log m)$ factor. In contrast to these results, the  goal in \smq is to achieve a value close to the true minimum/maximum taken over {\em all} random variables $X_1, X_2, \dots, X_n$ (not just the queried ones). Moreover, we  want to find an  approximately min/max  value with probability one, as opposed to    optimizing   the expected min/max value.  

A different formulation of the minimum-element problem is studied in \cite{Singla18}: this combines the query-cost and the value of the minimum-queried element into a single objective. They obtain an exact algorithm for this setting, which also extends to a wider class of constrained problems.

 Closely related  to our work,   \cite{CHAPLICK2021} studied the \smqi problem with exact precision, i.e., $\delta=0$. In particular, their goal is to identify an {\em interval} that is an  exact minimizer.  
\cite{CHAPLICK2021} obtained a $1.45$-approximation ratio for general query costs. The \smqi problem that we study allows for arbitrary precision $\delta$, and is significantly more complex than the setting in \cite{CHAPLICK2021}. One indication of the difficulty of handling arbitrary $\delta$ is that    the simpler \smq problem   with $\delta=0$ (where we want to find the exact minimum value) admits a  straightforward exact algorithm that  queries by increasing left-endpoint; however,  this algorithm has an unbounded ratio for  \smq with arbitrary  $\delta$ (see \S\ref{sec:valunit} for an  example).

As mentioned earlier, the \smq problem is also  related to 
optimization problems  
under  explorable uncertainty.  
Apart from the   minimum-value problem~\cite{kahan1991model},  various other problems like computing the median~\cite{feder2000computing}, minimum spanning tree~\cite{MSTerlebach2008,MSTmegow2017randomization} and set selection~\cite{erlebach2016query,BampisDELMS21,megow2023set} have been studied in this setting. 
The key difference from our work is that  these results focus on the competitive ratio. In contrast,  we compare to the optimal policy that is limited in the same manner as the algorithm.   We note that there is an $\tilde{\Omega}(n)$ lower bound on the  competitive ratio  for \smq and \smqi; see Appendix~\ref{app:bad-cr}. Our results show that much better (constant) approximation ratios are achievable for \smq and \smqi in the stochastic setting, relative  to an optimal policy.

\subsection{Preliminaries}\label{subsec:prelim}

\paragraph{Stopping rule for \smq.}  Even without querying any interval, we know that the minimum value is at most $R := \min_{i \in N}\{r_i\}$, the minimum right-endpoint. In order to simplify notation,   we incorporate this information using a dummy r.v.  $X_0 = [R,R]$   that is queried at the start of any policy and incurs no cost. 
We now formally define the  condition under which a  policy for \smq is allowed to stop. We will refer to the partial observations at any point in a policy (i.e.,  values of  r.v.s queried so far)  as the {\em state}. Consider any state, given by a subset $S\subseteq N$ of queried r.v.s along with their observations $\{x_i\}_{i\in S}$. The minimum observed value is $\min_{i\in S} x_i$ and the minimum possible value among the un-queried r.v.s is $\min_{j\in N\setminus S} \ell_j$. The stopping criterion is:
\begin{equation}
    \label{eq:smq-stop-rule}
 \min_{i\in S} x_i\quad \le \quad \min_{j\in N\setminus S} \ell_j \,+\, \delta.
 \end{equation}
If this criterion is met then $\vv=\min_{i\in S} x_i$ is guaranteed to satisfy $\mv\le \vv\le \mv+\delta$, where $\mv=\min_{j\in N} X_j$. Also, $\arg \min_{i\in S} x_i$ is a $\delta$-minimizer. On the other hand, if this criterion is not met then there is no value $v$ that guarantees    $\mv\le v\le \mv+\delta$: 
there is a non-zero probability that the minimum value is  $\min_{j\in N\setminus S} \ell_j $ or  $\min_{i\in S} x_i$ (and these values are more than $\delta$ apart). So,
\begin{prop}
    A policy for \smq can stop if and only if criterion~\eqref{eq:smq-stop-rule} holds. 
\end{prop}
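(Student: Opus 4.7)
The plan is to prove the two directions of the ``if and only if'' separately, formalizing the intuition already sketched in the paragraph preceding the proposition. The forward direction is a short calculation; the reverse direction is the contrapositive and requires producing two realizations of the un-queried r.v.s that are consistent with the current state but force incompatible choices of $\vv$.

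For the sufficiency direction, I would assume the stopping criterion \eqref{eq:smq-stop-rule} holds and set $\vv := \min_{i \in S} x_i$. The bound $\mv \le \vv$ is immediate, since $\vv = X_i$ for some $i \in S$ and $\mv$ is the minimum over all r.v.s. For the upper bound, I would decompose
\[
\mv \;=\; \min\Bigl(\min_{i \in S} x_i,\; \min_{j \in N \setminus S} X_j\Bigr) \;\ge\; \min\bigl(\vv,\; \min_{j \in N \setminus S} \ell_j\bigr),
\]
using $X_j \ge \ell_j$. The stopping criterion gives $\vv \le \min_{j \in N \setminus S} \ell_j + \delta$, so the right-hand side is at least $\vv - \delta$, yielding $\vv \le \mv + \delta$.

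For the necessity direction, I would argue the contrapositive: suppose the criterion fails, so $m := \min_{i \in S} x_i > \ell + \delta$ where $\ell := \min_{j \in N \setminus S} \ell_j$, and let $j^\star \in N \setminus S$ attain this minimum. I would exhibit two scenarios on the un-queried r.v.s, both with strictly positive probability (using the assumption that every r.v. has positive mass at both endpoints of its interval). In scenario $\omega_1$, each un-queried $X_j$ takes the value $r_j$; since the dummy r.v.\ $X_0 = R$ is already in $S$, we have $r_j \ge R \ge m$ for all $j$, so $\mv = m$. In scenario $\omega_2$, $X_{j^\star} = \ell_{j^\star} = \ell$, forcing $\mv \le \ell$. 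Any deterministic value $\vv$ that satisfies $\mv \le \vv \le \mv + \delta$ in both scenarios would have to lie in $[m, m + \delta] \cap [\ell, \ell + \delta]$, which is empty since $m > \ell + \delta$. Hence no such $\vv$ exists, and the policy cannot stop.

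The main obstacle — really a minor subtlety — is simply being careful that both scenarios $\omega_1, \omega_2$ have strictly positive conditional probability given the current state; this is exactly the reason the problem statement insists that each $X_j$ has positive mass at both endpoints of $I_j$, and that the dummy variable $X_0 = R$ be queried at the start (to ensure $m \le r_j$ in $\omega_1$). Everything else is bookkeeping.
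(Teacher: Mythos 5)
Your proof is correct and formalizes exactly the intuition sketched in the paper immediately before the proposition: the sufficiency direction is the same chain of inequalities, and the necessity direction instantiates the same two ``witness'' realizations (all un-queried at right-endpoints vs.\ the minimizing un-queried one at its left-endpoint), with the dummy r.v.\ $X_0 = R$ guaranteeing $m \le R \le r_j$ and the positive-mass-at-endpoints assumption guaranteeing both realizations have positive probability. One small remark you might add for completeness: in scenario $\omega_2$ the value of $\mv$ is in fact exactly $\ell$ regardless of how the remaining un-queried r.v.s realize (since every un-queried $X_j \ge \ell_j \ge \ell$ and $m > \ell$), so the intersection argument is tight rather than merely an upper bound.
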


The stopping rule for \smqi is described in \S\ref{subsec:unit-smqi}. An \smqi policy can stop either due to the \smq stopping rule (above) or by inferring an un-queried interval $i^*$ as a $\delta$-minimizer.

\paragraph{Adaptivity gap.}  We show that the adaptivity gap for the \smq problem is more than one: so adaptive policies may indeed perform better. This example also builds some intuition for the problem. 
Consider an instance $\mathcal{I}$ with   three intervals as shown in Figure~\ref{fig:eg}. In particular, $X_1 \in \{0, 3, \infty\} $, $X_2 \in \{1, \infty\} $, $X_3 \in \{2,  \infty\} $ and $\delta = 1$. Let  $\Pr(X_1 = 0) = \frac{1}{3}, \Pr(X_1 = 3) = \frac{1}{3}, \Pr(X_1 = \infty) = \frac{1}{3}, \Pr(X_2 = 1  ) = \epsilon,  \Pr(X_2 = \infty  ) = 1- \epsilon,  \Pr(X_3 = 2  ) = 1 - \epsilon, \Pr(X_3 = \infty) = \epsilon$. An adaptive policy is shown in Figure \ref{fig:ad}, which has cost at most $1 + \frac{2}{3} + \frac{ \epsilon}{3} = \frac{5+\epsilon}{3}$. By a case analysis (see Appendix~\ref{apx:na}) the best non-adaptive cost is $\min\left\{ \frac{6-\epsilon}{3} , \frac{5+2\epsilon}{3}\right\}$. Setting $\epsilon=\frac13$, we obtain  an adaptivity gap of $\frac{17}{16}$. We can also modify this instance slightly to get a worse adaptivity gap of   $\frac{12}{11}$.

\begin{figure}[h!]
    \centering
    \begin{minipage}{0.45\textwidth}
        \centering
        \includegraphics[width=0.65\textwidth]{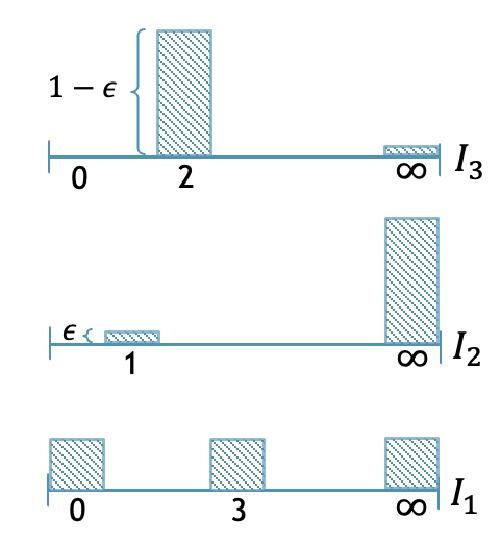}
        \caption{Adaptivity gap instance for \smq. \label{fig:eg}}
        
    \end{minipage}
     \hspace{0.01\textwidth} %
    \begin{minipage}{0.45\textwidth}
        \centering
        \tikzstyle{level 1}=[level distance=20mm, sibling distance=15mm]
        \tikzstyle{level 2}=[level distance=20mm, sibling distance=7mm]
        \tikzstyle{level 3}=[level distance=20mm, sibling distance=5mm]

        \begin{tikzpicture}[grow=right,->]
          \node {$X_1$}
            child {node[draw,fill=black,inner sep=2pt,label=right:{\scriptsize VAL = $0$}] {}
                  edge from parent
                  node[below left] {0}}
            child {node {$X_2$}
                    child{node[draw,fill=black,inner sep=2pt, label=right: {\scriptsize VAL = $1$}]{}
                    edge from parent
                    node[below] {1}}
                    child{node[draw,fill=black,inner sep=2pt, label=right: {\scriptsize VAL = $3$}]{}
                    edge from parent
                    node[above] {$\infty$}}
              edge from parent
              node[below]{3}
            }
            child {node {$X_3$} 
                child {node[draw,fill=black,inner sep=2pt,label={[label distance=-1mm]-15: {\scriptsize VAL = $2$}}] {}
                      edge from parent
                      node[below] {2}} 
                child {node {$X_2$}
                    child {node[draw,fill=black,inner sep=2pt, label=right: {\scriptsize VAL = $1$}]{}
                        edge from parent
                        node[below]{$1$}
                    }
                    child {node[draw,fill=black,inner sep=2pt,label=right: {\scriptsize VAL = $\infty$} ]{}
                        edge from parent
                        node[above]{$\infty$} 
                    }
                      edge from parent
                      node[above] {$\infty$}}
                edge from parent
                node[above left] {$\infty$}           
            };
        \end{tikzpicture}
        \caption{Optimal adaptive policy }
        \label{fig:ad}
    \end{minipage}
\vspace{-1cm}\end{figure}

\paragraph{Fixed threshold problem.} In our analysis, we relate \smq to the following  simpler   problem. Given $n$ r.v.s $\{X_i : i\in N\}$ with costs as before, a {\em fixed} threshold $\theta$ and budget $k$, find a policy having query-cost at most $k$ that maximizes the   probability of observing a realization less than $\theta$. 
A useful property of this fixed threshold problem is that it has adaptivity gap  one;    see Appendix~\ref{app:fixed-t}. 
\begin{prop}\label{prop-fixed-t}
 Consider any instance of the fixed threshold problem.   Let  $V^*$ and $F^*$ denote the maximum success probabilities over adaptive and non-adaptive policies respectively. Then, $V^*=F^*$
\end{prop}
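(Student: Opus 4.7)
The plan is to prove $V^*=F^*$ by transforming any adaptive policy into a non-adaptive policy achieving the same success probability; the reverse inequality $F^*\le V^*$ is immediate since non-adaptive policies form a subclass of adaptive ones.

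First I would argue that an optimal adaptive policy $\pi$ may be assumed to halt the instant it observes a success (i.e., some queried $X_i<\theta$), since further queries cannot alter whether a success has occurred but only consume budget. Hence every non-terminal node of $\pi$'s decision tree queries some r.v.\ $i$ and, conditional on failure ($X_i\ge\theta$), branches into a continuation subtree that a priori may depend on the realized value.

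Next I would invoke independence of the $X_j$'s to collapse these failure branches. Conditioned on $X_i\ge\theta$ and irrespective of the precise realized value, the joint distribution of the remaining un-queried r.v.s is unchanged. Therefore, among the continuation subtrees $\{T_a\}$ indexed by failure value $a$, the one maximizing success probability subject to the remaining budget $k-c_i$ is identical for every $a$. Replacing each $T_a$ by that best subtree preserves the budget and can only increase success probability; formalized by a bottom-up induction on tree depth, this transformation converts $\pi$ into a policy whose failure branches are deterministic, i.e., a single ``failure spine'' $i_1,i_2,\dots,i_m$ satisfying $\sum_{j=1}^m c_{i_j}\le k$.

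This spine defines a non-adaptive policy: query $i_1,\dots,i_m$ in order and stop at the first success. Its success probability is $1-\prod_{j=1}^m(1-\Pr[X_{i_j}<\theta])$, equal to the success probability of $\pi$ (since $\pi$ succeeds iff at least one $X_{i_j}<\theta$ is observed along the spine). This yields $V^*\le F^*$, completing the proof. The main obstacle is the ``collapse failure branches'' step, since different failure values yield different subtrees a priori; independence is the key lever that makes the optimal continuation a property of the set of remaining r.v.s rather than of the observed history, and care is required to carry out the replacement inductively so that the budget is respected along every root-to-leaf path simultaneously.
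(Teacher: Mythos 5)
Your proposal is correct and uses essentially the same core idea as the paper: both exploit the independence of the $X_j$'s to show that, conditional on a failure at the first query, the optimal continuation is a function only of the remaining budget and r.v.s (not of the realized failure value), and then apply an induction (you phrase it bottom-up on tree depth, the paper top-down on budget) to peel off one query at a time until a non-adaptive spine remains. The only small slip is the claim that the collapsed policy's success probability is \emph{equal} to that of $\pi$; strictly the replacement only guarantees $\ge$, but since $\pi$ is optimal adaptive this forces equality, so the conclusion stands.
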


\section{Algorithm for Unit Costs}
\label{sec:valunit}

Before presenting our algorithm, we discuss two simple greedy policies and show why they fail to achieve a good approximation.

\begin{enumerate}
    \item A natural approach is to  select intervals by increasing left-endpoint. Indeed, \cite{kahan1991model} shows that this algorithm is optimal when $\delta = 0$, even in an online setting (with open intervals). 
    Consider the instance with two types of intervals as shown in  Figure \ref{fig:badexample}. The r.v.s $X_1,\dots, X_{n/2}$ are identically distributed with $X_i=0$ w.p. $\frac1n$ and $X_i=n$ otherwise. The remaining r.v.s $X_{n/2+1},\dots, X_{n}$ are identically distributed with $X_i=\frac{\delta}{2}$ w.p. $\frac12$ and $X_i=n$ otherwise. The greedy policy queries r.v.s in the order $1,2,\dots, n$, resulting in an expected cost of $\Omega(n)$ as it can  stop only when it observes a ``low'' realization for some r.v. However, the policy that probes in the reverse order $n,n-1,\dots, 1$ has constant expected cost: the policy can stop upon observing {\em any}  ``low'' realization (even if a  value of $\delta/2$ is observed, it is guaranteed to be within $\delta$ of the true minimum). So the approximation ratio of this greedy policy is  $\Omega(n)$.
   
\begin{figure}[h]
    \centering
    \caption{Bad example for greedy by left-endpoint.}
    \vspace{-0.5cm}
    \[\begin{tikzpicture}[scale=0.4]
    \draw (0,0) -- (9,0); 
    \draw (0,0.3) -- (0,-0.3) node[below] {0};
    \draw (1.5,0.3) -- (1.5,-0.3) node[below] {$\delta/2$};
    \draw (9,0.3) -- (9,-0.3) node[below] {$n$};
    \draw (0,1) -- (9,1) node[anchor = west]{$X_1$};

    \fill (0,1) circle (0.1);
    \fill (9,1) circle (0.1);
    \node at (4.5,2.25) {$\vdots$};
    \draw (0,3)  node[below] { }  -- (9,3) node[anchor = west]{$X_{ n  /2 }$};
    \fill (0,3) circle (0.1);
    \fill (9,3) circle (0.1);
     \draw (1.5,4) -- (9,4) node[anchor =  west]{$X_{n/2 + 1} $};
    \fill (1.5,4) circle (0.1);
    \fill (9,4) circle (0.1);
    \node at (5.5,5.25) {$\vdots$};
         \draw (1.5,6)node[anchor = east ] { }  -- (9,6) node[anchor =  west]{$X_{n}$};
    \fill (1.5,6) circle (0.1);
    \fill (9,6) circle (0.1);
    \draw [decorate,decoration={brace,amplitude=10pt,raise=2pt},xshift = -2 pt, yshift=0pt] (-0.2,1) -- (-0.2,3) node [black,left,xshift=-13pt, yshift=-15pt, align =center] { \tiny \scriptsize $\Pr(X_{i} = 0) = \frac1n$,  \\ \tiny \scriptsize for $i=1,..., \frac{n}2$};

    \draw [decorate,decoration={brace,amplitude=10pt,raise=2pt},xshift = 40 pt, yshift=0pt] (-0.2,4) -- (-0.2,6) node [black,left,xshift=-13pt, yshift=-15pt, align =center] { \tiny \scriptsize $\Pr(X_{i} = \frac{\delta}2) = \frac12$,  \\ \tiny \scriptsize for $i=  \frac{n}2 + 1, ..., n$};
    \end{tikzpicture}
    \label{fig:badexample} \]
\vspace{-1cm}
\end{figure}
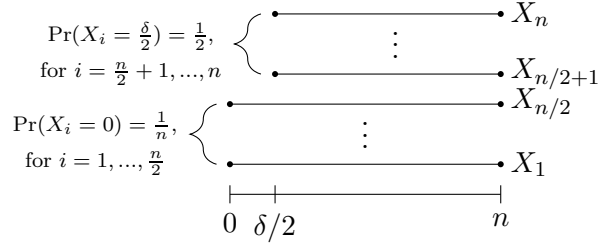

    \item A different greedy policy (based on the instance in Figure \ref{fig:badexample}) is to always  select the interval that maximizes the likelihood of stopping in one step. Now consider another instance with three types of intervals; see Figure~\ref{fig:rightstop}. The r.v. $X_n$ is always $1.4\delta$. The r.v. $X_1$ takes value $0$ w.p. $\frac1{2n}$  and has value $n$   otherwise. The remaining r.v.s $X_2,\dots, X_{n-1}$ are identically distributed with $X_i=\frac{\delta}{2}$ w.p. $\frac1n$ and $X_i=n$ otherwise. 
    As long as $X_1$ is not queried, the probability of stopping (in one step) is as follows:  $\frac1{2n}$ for $X_1$, $\frac1n$ for $X_2,\dots, X_{n-1}$ and zero for $X_n$. So this  greedy policy will query in the order $2,3,\dots, n-1,1,n$ resulting in an $\Omega(n)$ expected cost.  On the other hand, querying the r.v.s $X_1$ and $X_n$ guarantees that the policy can stop. So the optimal cost is at most $2$, implying an $\Omega(n)$ approximation ratio.

    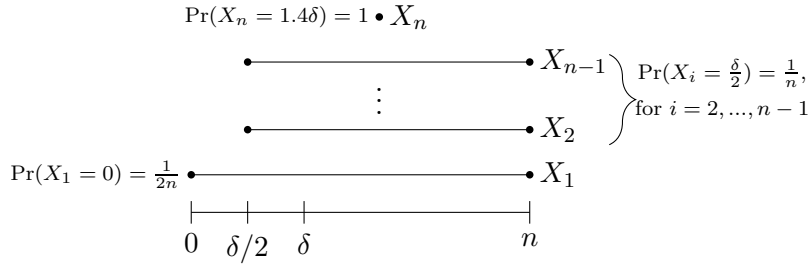
\begin{figure}[!h]
        \centering
    \caption{Bad example for greedy by stopping probability.}        
    \vspace{-0.5cm}
    \[\begin{tikzpicture}[scale=0.5]
    \draw (0,0) -- (9,0); 
    \draw (0,0.3) -- (0,-0.3) node[below] {0};
    \draw (1.5,0.3) -- (1.5,-0.3) node[below] {$\delta/2$};
    \draw (3,0.3) -- (3,-0.3) node[below] {$\delta$};
    \draw (9,0.3) -- (9,-0.3) node[below] {$n$};
    \draw (0,1) node[anchor = east] { \tiny \scriptsize $\Pr(X_{1} = 0) = \frac1{2n}$} --  (9,1) node[anchor = west]{$X_1$};
     \fill (0,1) circle (0.1);
    \fill (9,1) circle (0.1);

        \draw (1.5,2.2)    -- (9,2.2) node[anchor = west]{$X_2$};
    \fill (1.5,2.2) circle (0.1);
    \fill (9,2.2) circle (0.1);
    \node at (5,3.2) {$\vdots$};
    \draw (1.5,4)  node[below] { }  -- (9,4) node[anchor = west]{$X_{n-1}$};
    \fill (1.5,4) circle (0.1);
    \fill (9,4) circle (0.1);

         \draw (5,5.2)node[anchor =  east] { \tiny \scriptsize $\Pr(X_{n} = 1.4\delta ) = 1$ }  -- (5,5.2) node[anchor =  west]{$X_{n}$};
    \fill (5,5.2) circle (0.1);
    \draw [decorate,decoration={brace,amplitude=10pt,mirror,raise=2pt},yshift=0pt, xshift = 22pt] (10.2,1.8) -- (10.2,4.2) node [black,right,xshift= 8 pt , yshift=-13pt, align = center] { \tiny \scriptsize $\Pr(X_{i} = \frac{\delta}2) = \frac1n$, \\ \tiny \scriptsize for $i=2,...,n-1$};

    \end{tikzpicture}
    \label{fig:rightstop} \]
\end{figure}   
\vspace{-1cm}
    
\end{enumerate}

Our approach is to interleave the above two greedy criteria. In particular, each iteration of our algorithm  makes two queries:  the interval with the smallest left-endpoint and the interval that maximizes the probability of stopping in one step.  We will show that this leads to a constant-factor approximation.  
We first re-number   intervals by increasing order of their left-endpoint, i.e.,  $\ell_1 \leq \ell_2\leq \dots \leq \ell_n$.  For each $k\in N$, let $\theta_k := \ell_{k+1} + \delta$. Algorithm \ref{pol:greedy} describes our  algorithm formally.

\begin{algorithm}
\caption{Non-Adaptive Double Greedy}
\label{pol:greedy}
\begin{algorithmic}[1]
\State Let $\ell^* = \min_{i \in N} \ell_i$, $m^* = R :=\min_{i=1}^n r_i$, and $\pi\gets \emptyset$.
\For{$j = 1, \dots, n$}   \Comment{iterations }
\State \label{alg:unit-step-a}  Query interval $j$ (if not already in $\pi$).
\State \label{alg:unit-step-b}  Query interval $b(j) = \argmax_{i \in  N \setminus (\pi \circ j)} \Pr[X_i \le \theta_j] $.
\State Update list $\pi \gets \pi \circ j \circ b(j) $. \Comment{skip $j$ if it was already in $\pi$ }
\State Update $m^* = \min \{m^*, X_j, X_{b(j)}\}$ and $\ell^* = \min_{i \in  N \setminus \pi  }\{\ell_i\}$.
\If{$m^* - \ell^* \leq \delta$}
    stop. \EndIf
\EndFor
\end{algorithmic}
\end{algorithm}

Equivalently, we can view  Algorithm \ref{pol:greedy} as first computing  the permutation  $\pi$ (without querying) and then performing queries in the order given by  $\pi$ until the stopping criterion is met. Note that Algorithm \ref{pol:greedy} is non-adaptive because it uses observations only to determine when to stop. So, our analysis also upper bounds the adaptivity gap.  

We overload notation slightly and use $\pi$ to also denote the non-adaptive policy given in Algorithm \ref{pol:greedy}. Note that each {\em iteration} in this policy involves {\em two} queries. We use $\sigma$ to denote the optimal (adaptive) policy. Let $c_{exp}(\pi)$ and $c_{exp}(\sigma)$ denote the expected number of queries in policies $\pi$ and $\sigma$, respectively. 
The key step in the analysis is to relate the termination probabilities in these two policies, formalized below. 
\begin{lem} \label{lem:unit-main} For any $k\ge 1$, we have 
$$\Pr[\sigma\text{ finishes in }k \text{ queries}]  \leq \Pr[\pi \text{ finishes in 2}k \text{ iterations}].$$    
\end{lem}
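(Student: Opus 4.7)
The plan is to compare both probabilities to the value $F^*_k$ of an auxiliary fixed-threshold problem at the single threshold $\theta_k=\ell_{k+1}+\delta$. Let $p_i:=\Pr[X_i\le\theta_k]$, let $S^*$ be the top $k$ intervals by $p_i$, and set $F^*_k:=1-\prod_{i\in S^*}(1-p_i)$. The strategy is to sandwich $\Pr[\sigma\text{ finishes in }k]\le F^*_k\le\Pr[\pi\text{ finishes in }2k]$.

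For the upper bound on $\sigma$, I would argue that if $\sigma$ stops after at most $k$ queries with queried set $Q_\sigma$, then $|Q_\sigma|\le k$ and by pigeonhole at least one of $\{1,\dots,k+1\}$ is unqueried, forcing $\min_{j\notin Q_\sigma}\ell_j\le\ell_{k+1}$. The SMQ stopping rule~\eqref{eq:smq-stop-rule} then requires $\sigma$ to have observed some $X_i\le\theta_k$ within its $\le k$ queries. Proposition~\ref{prop-fixed-t} (adaptivity gap one for the fixed-threshold problem) therefore bounds $\Pr[\sigma\text{ finishes in }k]\le F^*_k$, since the best non-adaptive policy at threshold $\theta_k$ with $k$ queries selects $S^*$.

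For the lower bound on $\pi$, I would use that step~(a) has queried $\{1,\dots,j\}$ within $j$ iterations, so $\ell^*_j\ge\ell_{j+1}$; hence stopping at iteration $j$ is already implied by observing any $X_i\le\theta_j$ among the queried intervals. Since $\theta_{2k}\ge\theta_k$, any queried interval with $X_i\le\theta_k$ suffices to stop by iteration $2k$. When $S^*$ is fully contained in the set $Q_{\pi,2k}$ of intervals queried by $\pi$ in $2k$ iterations, the bound $\Pr[\pi\text{ stops}]\ge\Pr[\exists i\in S^*:X_i\le\theta_k]=F^*_k$ is immediate.

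The main obstacle is the complementary case: some $i^*\in S^*\setminus Q_{\pi,2k}$ remains unqueried, and I would pick $i^*$ to maximize $p_{i^*}$. Here I would lower bound $\Pr[\pi\text{ stops}]$ by combining two disjoint sources of stopping: observing $X_i\le\theta_k$ for some queried $i\in T\setminus B'$, where $T=S^*\cap Q_{\pi,2k}$ and $B'=\{b(k),\dots,b(2k)\}$; and observing $X_{b(j)}\le\theta_j$ for some $j\in[k,2k]$. Because $T\setminus B'$ and $B'$ involve disjoint intervals, the failure events are independent and
\[
\Pr[\pi\text{ stops}]\ \ge\ 1-\prod_{i\in T\setminus B'}(1-p_i)\cdot\prod_{j=k}^{2k}(1-a_j),\qquad a_j:=\Pr[X_{b(j)}\le\theta_j].
\]
The greedy choice of $b(j)$ over the still-unqueried $i^*$ gives $a_j\ge p_{i^*}$ for $j\ge k$ (via $\theta_j\ge\theta_k$), strengthening to $a_j\ge p_{b(j)}$ whenever $b(j)\in S^*$. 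Substituting these bounds and canceling the factors shared with $\prod_{i\in S^*}(1-p_i)$, the target $\Pr[\pi\text{ stops}]\ge F^*_k$ reduces to the combinatorial inequality $(1-p_{i^*})^{k+1-m}\le\prod_{i\in U}(1-p_i)$, where $U=S^*\setminus Q_{\pi,2k}$ and $m=|T\cap B'|$. This holds since each factor on the right is at least $(1-p_{i^*})$ (because $p_i\le p_{i^*}$ on $U$), and since $m\le|T|=k-|U|$ makes the exponent $k+1-m$ at least $|U|+1$. This slack of one extra iteration is precisely where the factor $2$ in ``$2k$ iterations'' is consumed.
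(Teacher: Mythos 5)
Your proposal is correct and takes essentially the same two-step route as the paper: sandwich $\sigma$ and $\pi$ around the fixed-threshold quantity at $\theta_k=\ell_{k+1}+\delta$ (invoking Proposition~\ref{prop-fixed-t} to bound $\sigma$), and then lower-bound $\pi$'s stopping probability by comparing $S^*=T^*$ with the greedy $b(j)$ choices via a worst-case reference interval that remains unqueried. The bookkeeping differs slightly from the paper's---you split the stopping event into two independent sources (queried $S^*$-intervals at threshold $\theta_k$ and the intervals $b(k),\dots,b(2k)$ at their iteration-specific thresholds $\theta_j$, with a case split on whether $S^*\subseteq Q_{\pi,2k}$), whereas the paper compares $T^*$ to the single set $B=\{b(1),\dots,b(2k)\}$ at the uniform threshold $\theta_{2k}$ with reference interval $d^*\in T^*\setminus B$---but the core greedy-dominance and counting argument (using $|T^*|=k$ so that the exponents line up) is the same.
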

We will prove this lemma in the next subsection. First, we complete the analysis using this.
\begin{thm}
\label{thm:main} We have $c_{exp}(\pi) \leq 4\cdot c_{exp}(\sigma)$.    
\end{thm}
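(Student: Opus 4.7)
The plan is to deduce the theorem from Lemma~\ref{lem:unit-main} using the standard tail-sum identity for expectations; given the lemma, the calculation is almost mechanical.

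First I would translate between iterations and queries. Let $M$ denote the (random) number of iterations performed by $\pi$ and let $K$ be the (random) number of queries performed by $\sigma$. Each iteration of $\pi$ consists of at most two queries (Steps \ref{alg:unit-step-a}--\ref{alg:unit-step-b}), so $c_{exp}(\pi) \leq 2\,E[M]$, while $c_{exp}(\sigma) = E[K]$ by definition. Rewriting Lemma~\ref{lem:unit-main} in complementary form yields the tail bound
\[
\Pr[M > 2k] \;=\; 1 - \Pr[M \leq 2k] \;\leq\; 1 - \Pr[K \leq k] \;=\; \Pr[K > k], \qquad \text{for every } k \geq 1.
\]

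Next I would apply the identity $E[M] = \sum_{j \geq 0}\Pr[M > j]$ and group the terms in consecutive pairs. Since $\Pr[M > 2k+1] \leq \Pr[M > 2k]$,
\[
E[M] \;=\; \sum_{k=0}^{\infty}\bigl(\Pr[M > 2k] + \Pr[M > 2k+1]\bigr) \;\leq\; 2\sum_{k=0}^{\infty}\Pr[M > 2k].
\]
For $k \geq 1$ the tail bound gives $\Pr[M > 2k] \leq \Pr[K > k]$, and the $k=0$ term contributes at most $1$, which matches $\Pr[K > 0]$ in any nontrivial instance (where at least one query is necessary). Thus $\sum_{k \geq 0}\Pr[M > 2k] \leq \sum_{k \geq 0}\Pr[K > k] = E[K]$, so $E[M] \leq 2\,E[K]$. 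Chaining,
\[
c_{exp}(\pi) \;\leq\; 2\,E[M] \;\leq\; 4\,E[K] \;=\; 4\,c_{exp}(\sigma).
\]

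The substance of the theorem sits entirely in Lemma~\ref{lem:unit-main}; here there is no real obstacle beyond packaging. The only sliver of care is the $k=0$ base case, which can be handled either by noting that in any non-degenerate instance both policies must perform a query, or by a trivial precheck of the stopping criterion~\eqref{eq:smq-stop-rule} before entering the loop of Algorithm~\ref{pol:greedy}.
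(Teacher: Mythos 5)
Your proof is correct and takes essentially the same route as the paper: both deduce the theorem from Lemma~\ref{lem:unit-main} via the tail-sum representation of expectation. The paper integrates $c_{exp}(\sigma)=\int_0^\infty\Pr[C_\sigma>t]\,dt$ and substitutes $y=4t$, while you sum $E[M]=\sum_{j\ge 0}\Pr[M>j]$ and pair consecutive terms; these are equivalent packagings of the same bound, with the same minor care needed at the $k=0$ boundary.
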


\begin{proof}
Let $C_{\sigma}$ denote the random variable that captures the number of queries made by the optimal policy $\sigma$. Similarly, let $C_{\pi}$ denote the number of queries made by our policy. Using Lemma~\ref{lem:unit-main}  and the fact that policy $\pi$  makes two queries in each iteration, for any $k\ge 1$ we have
\begin{equation}
    \Pr[C_{\sigma} \leq k] = \Pr[\sigma\text{ finishes in }k \text{ queries}]  \leq \Pr[\pi \text{ finishes in 2}k \text{ iterations}]  \le \Pr[C_{\pi} \leq 4k]
\end{equation}
Hence,
\begin{align}
    c_{exp}(\sigma) & = \int_{0}^{\infty}{\Pr[C_{\sigma} > t] dt}    = \int_{0}^{\infty} \left( 1 - \Pr[C_{\sigma} \leq t] \right) dt   \geq \int_{0}^{\infty}{(1 - \Pr[C_{\pi} \leq 4t]) dt} \notag \\ 
    & =  \frac{1}{4}\int_{0}^{\infty}{(1 - \Pr[C_{\pi} \leq y]) dy}   = \frac{1}{4} \int_{0}^{\infty}{\Pr[C_{\pi} > y] dy} = \frac{1}{4} c_{exp}(\pi) \label{eq:4t}
\end{align}
The first equality in \eqref{eq:4t} is by a change of variables $y=4t$. 
    \end{proof}

\subsection{Proof of Key Lemma}
We now prove Lemma~\ref{lem:unit-main}. Fix any $k\ge 1$ and define threshold  $\theta:=\theta_k= \ell_{k+1}+\delta$.

Let $T^*\subseteq   N$ denote the optimal solution to the non-adaptive ``fixed threshold'' problem:
\begin{equation}
    \label{eq:t*}
\max_{T\subseteq N, |T|\le k}\quad \Pr\left[ \min_{i\in T}  X_i \le \theta\right].\end{equation}

We then proceed in  two steps, as follows.
$$        \Pr[\sigma\text{ finishes in }k \text{ queries}]  \leq \Pr\left[\min_{i \in T^*} X_i  \le  \theta\right] \leq \Pr[\pi \text{ finishes in }2k \text{ iterations}] 
$$ 
The first inequality is  shown in Lemma \ref{lem:exists}: this uses the fact that  the fixed-threshold problem 
has  adaptivity gap one (Proposition~\ref{prop-fixed-t}). The second inequality is  shown in Lemma \ref{lem:lbalg}: this relies on the greedy criteria used in our algorithm.

\begin{lem}
\label{lem:exists} $\Pr[\sigma\text{ finishes in }k \text{ queries}]  \leq \Pr\left[\min_{i \in T^*} X_i  \le  \theta\right]$.   
\end{lem}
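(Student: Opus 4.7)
The bridge between the two sides will be the fixed-threshold problem with threshold $\theta = \ell_{k+1}+\delta$ and budget $k$, whose non-adaptive optimum is exactly $\Pr[\min_{i\in T^*} X_i \le \theta]$ by the definition of $T^*$ in \eqref{eq:t*}. I would argue that the event ``$\sigma$ finishes in $k$ queries'' is subsumed by the success event of an adaptive strategy for this fixed-threshold instance, and then invoke Proposition~\ref{prop-fixed-t} to replace ``adaptive'' by ``non-adaptive.''

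The first step is the deterministic observation that couples the stopping rule of \smq with the ordering $\ell_1\le \ell_2\le\cdots\le \ell_n$. Suppose $\sigma$ stops after querying a set $S$ with $|S|\le k$. Since $|S|\le k$, at least one index in $\{1,2,\dots,k+1\}$ lies outside $S$, so $\min_{j\in N\setminus S}\ell_j \le \ell_{k+1}$. Combined with the stopping criterion \eqref{eq:smq-stop-rule}, this yields
\[
\min_{i\in S} X_i \;\le\; \min_{j\in N\setminus S}\ell_j + \delta \;\le\; \ell_{k+1} + \delta \;=\; \theta.
\]
Hence the event $\{\sigma\text{ finishes in }k\text{ queries}\}$ is contained in $\{\min_{i\in S(\sigma)} X_i \le \theta \text{ and } |S(\sigma)|\le k\}$, where $S(\sigma)$ denotes the (random) set of intervals queried by $\sigma$.

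The second step is to build a legal adaptive policy $\sigma'$ for the fixed-threshold problem with budget $k$. Run $\sigma$ normally, but if $\sigma$ stops with fewer than $k$ queries, continue by querying arbitrary unqueried intervals until exactly $\min(k,n)$ queries have been made (these extra queries can only help the event $\{\min_{i\in S}X_i\le \theta\}$). Then $\sigma'$ queries at most $k$ intervals and its success probability $\Pr[\min_{i\in S(\sigma')} X_i\le \theta]$ is at least $\Pr[\sigma\text{ finishes in }k\text{ queries}]$ by the first step. Since $\sigma'$ is a feasible adaptive policy for the budget-$k$ fixed-threshold instance, its success probability is at most $V^*$, the adaptive optimum.

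Finally, I invoke Proposition~\ref{prop-fixed-t}, which states $V^* = F^*$ for fixed-threshold instances. Since $T^*$ solves \eqref{eq:t*}, $F^* = \Pr[\min_{i\in T^*} X_i\le \theta]$, which chains the above inequalities to give the claim. The only conceptual subtlety (and the step I would check most carefully) is ensuring the conversion from $\sigma$ to $\sigma'$ is valid: we need $\sigma'$ to be well-defined as an adaptive strategy that performs at most $k$ queries on \emph{every} sample path, even on paths where $\sigma$ would stop after very few queries; padding with arbitrary extra queries handles this, and monotonicity of $\{\min_{i\in S}X_i\le\theta\}$ in $S$ guarantees the inequality goes the right way.
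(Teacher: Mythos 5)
Your argument is correct and matches the paper's proof: both observe that after at most $k$ queries the smallest un-queried left-endpoint is at most $\ell_{k+1}$, combine this with the stopping rule \eqref{eq:smq-stop-rule} to bound the stopping event by $\{\min_{i\in S}X_i\le\theta\}$, and then invoke Proposition~\ref{prop-fixed-t}. The only cosmetic difference is that you pad the truncated policy with extra queries to reach budget $k$, whereas the paper simply works with the truncated policy $\sigma^k$ directly (which already has cost at most $k$); the padding is harmless but unnecessary.
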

\begin{proof}

Let  $\sigma^k$ denote the optimal policy truncated after $k$  queries: so the cost of $\sigma^k$ is always at most $k$. Let $L(\sigma^k)= \min_{i \in N \setminus {\sigma^k}} \{\ell_i\}$ denote the smallest un-queried left-endpoint at the end of $\sigma^k$; this is a random value because $\sigma^k$ is an adaptive policy. Then, 
\begin{align}
    \Pr[\sigma\text{ finishes in }k \text{ queries}] &= \Pr\left[ \min_{i \in \sigma^k} X_i \,\le \, L(\sigma^k)+ \delta \right] \label{eq:def}\\
    &\leq \Pr \left[ \min_{i \in \sigma^k} X_i  \,\le\, \ell_{k+1} + \delta \right]  = \Pr \left[ \min_{i \in \sigma^k} X_i  \,\le\, \theta\right] \label{eq:lk}\\
    & \le  \Pr\left[\min_{i \in T^*} X_i  \le  \theta\right]
    \label{eq:nonadap} 
\end{align}
Equality \eqref{eq:def} is  by the stopping criterion for \smq. The inequality in \eqref{eq:lk} uses   the observation that after {\em any} $k$  queries,  the smallest un-queried left-endpoint must be at most  $\ell_{k+1}$: so $L(\sigma^k)\le \ell_{k+1}$ always. The equality in \eqref{eq:lk} is by definition of the threshold $\theta$. Inequality  \eqref{eq:nonadap} follows from  Proposition~\ref{prop-fixed-t}: we view  $\sigma^k$ is a feasible adaptive policy for the fixed-threshold problem and $T^*$ is the optimal non-adaptive policy. 
 \end{proof}

\noindent
\begin{lem} 
\label{lem:lbalg} $\Pr\left[\min_{i \in T^*} X_i  \le  \theta\right] \leq \Pr[\pi \text{ finishes in }2k \text{ iterations}] $. \end{lem}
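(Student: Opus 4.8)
\textbf{Proof plan for Lemma~\ref{lem:lbalg}.}

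The plan is to compare the event that $\min_{i\in T^*}X_i\le\theta$ with the event that $\pi$ stops within $2k$ iterations. Recall $\theta=\theta_k=\ell_{k+1}+\delta$, and that in iterations $1,\dots,k$ the policy $\pi$ queries, in particular, interval $j$ (by smallest left-endpoint) together with the one-step-stopping maximizer $b(j)$. Two structural facts drive the argument. First, after $\pi$ has completed $k$ iterations, it has queried all of intervals $1,\dots,k$ (step~\ref{alg:unit-step-a}), so the smallest un-queried left-endpoint is at least $\ell_{k+1}$; hence if during the first $k$ iterations $\pi$ ever observes a value at most $\ell_{k+1}+\delta=\theta$, the \smq stopping criterion~\eqref{eq:smq-stop-rule} is met and $\pi$ has already stopped. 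Thus it suffices to show that the probability $\pi$ observes some realization $\le\theta$ within its first $k$ iterations is at least $\Pr[\min_{i\in T^*}X_i\le\theta]$.

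The second fact exploits the greedy choice in step~\ref{alg:unit-step-b}. In iteration $j$, among all not-yet-queried intervals, $b(j)$ maximizes $\Pr[X_i\le\theta_j]$. Since $\theta_j=\ell_{j+1}+\delta\le\ell_{k+1}+\delta=\theta$ for every $j\le k$, we have $\Pr[X_{b(j)}\le\theta]\ge\Pr[X_{b(j)}\le\theta_j]\ge\Pr[X_i\le\theta_j]$ is not quite the inequality I want against $\theta$ directly, so the cleaner route is: in iteration $j$ the set of intervals queried so far in the ``$b$'' role is $\{b(1),\dots,b(j-1)\}$ together with the prefix $\{1,\dots,j\}$; the greedy pick $b(j)$ has $\Pr[X_{b(j)}\le\theta_j]$ at least as large as that of \emph{any} remaining interval, and in particular at least as large as that of any element of $T^*$ not yet queried. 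I would make this precise via an exchange/coupling argument: process the $k$ iterations of $\pi$ one at a time, and show that the collection $\{1,\dots,k\}\cup\{b(1),\dots,b(k)\}$ of $2k$ intervals queried in the first $k$ iterations ``dominates'' $T^*$ in the sense that
\[
\Pr\Big[\min_{i\in\{1,\dots,k\}\cup\{b(1),\dots,b(k)\}} X_i\le\theta\Big]\ \ge\ \Pr\Big[\min_{i\in T^*}X_i\le\theta\Big].
\]
To see this, note $|T^*|\le k$, and build a comparison set $T'$ by keeping $T^*\cap\{1,\dots,k\}$ and, for each element of $T^*$ lying outside $\{1,\dots,k\}$, charging it to a distinct one of the $b(j)$'s. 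Greediness of each $b(j)$ with respect to threshold $\theta_j\le\theta$, together with the fact that at the moment $b(j)$ is chosen all of $1,\dots,j$ and $b(1),\dots,b(j-1)$ are excluded, lets me argue that replacing a $T^*$-element by the corresponding $b(j)$ cannot decrease the probability that the running minimum drops below $\theta$. Summing these single-swap inequalities along the $k$ iterations yields the displayed bound, and since the $2k$ intervals $\{1,\dots,k\}\cup\{b(1),\dots,b(k)\}$ are exactly those $\pi$ has queried after $2k$ iterations (actually after $k$ iterations, since each iteration is two queries), $\pi$ stops by iteration $k$ whenever $\min$ over this set is $\le\theta$, which is at least $\Pr[\min_{i\in T^*}X_i\le\theta]$.

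\textbf{Main obstacle.} The subtle point is the swap argument in the presence of the \emph{sequential} greedy choices: each $b(j)$ is optimal only against the threshold $\theta_j$ (not $\theta$) and only among intervals not already queried, so I cannot simply say ``$b(j)$ beats the $j$-th element of $T^*$'' in isolation. The careful bookkeeping is to order the elements of $T^*\setminus\{1,\dots,k\}$ and match the $s$-th such element to $b(j)$ for an appropriate increasing sequence of iteration indices $j$, ensuring that when $b(j)$ is picked the matched $T^*$-element is still available as a candidate (so greedy implies $\Pr[X_{b(j)}\le\theta_j]\ge\Pr[X_{t}\le\theta_j]$), and then using monotonicity in the threshold ($\theta_j\le\theta$) plus the union-bound/inclusion structure of ``$\min\le\theta$'' to convert the per-interval inequality into the set-level inequality. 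I expect this matching-plus-monotonicity step to be the crux; everything else is the routine translation between ``observes a value $\le\theta$'' and ``the \smq stopping rule fires,'' which follows immediately from~\eqref{eq:smq-stop-rule} and the prefix property $L\ge\ell_{k+1}$ after $k$ iterations.
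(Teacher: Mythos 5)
Your reduction of the lemma to the set-level inequality $\Pr[\min_{i\in\{1,\dots,k\}\cup\{b(1),\dots,b(k)\}}X_i\le\theta]\ \ge\ \Pr[\min_{i\in T^*}X_i\le\theta]$ is where the argument breaks. The swap step needs, for a matched pair $(b(j),t)$ with $t\in T^*$, the comparison $\Pr[X_{b(j)}\le\theta]\ge\Pr[X_t\le\theta]$ at the \emph{final} threshold $\theta=\theta_k$; but greediness only gives $\Pr[X_{b(j)}\le\theta_j]\ge\Pr[X_t\le\theta_j]$ at the smaller threshold $\theta_j\le\theta_k$, and this inequality does not transfer upward in the threshold (monotonicity goes the wrong way: an interval that dominates at $\theta_j$ can be far worse at $\theta_k$). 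In fact the strengthened claim you are trying to prove --- that $\pi$ stops within $k$ iterations with probability at least $\Pr[\min_{i\in T^*}X_i\le\theta_k]$ --- is false. Take $\delta=1$, $k=2$, unit costs, and six intervals indexed by left-endpoint: $X_1,X_2\in\{0,100\}$ with $\Pr[X_i=0]=\epsilon$; $X_3\in\{0.9,100\}$ with $\Pr[X_3=0.9]=0.6$; $X_4\in\{1,100\}$ with $\Pr[X_4=1]=\epsilon$; $X_5,X_6\in\{1.5,100\}$ with $\Pr[X_i=1.5]=0.99$. Here $\theta_1=1$ and $\theta=\theta_2=\ell_3+\delta=1.9$. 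The algorithm's first two iterations query $\{1,3\}$ (since $b(1)=3$ maximizes $\Pr[X_i\le 1]$) and then $\{2,b(2)\}$ with $b(2)\in\{5,6\}$, so after $k=2$ iterations the queried set is $\{1,2,3,5\}$ and the stopping threshold is $\ell_4+\delta=2$; hence $\Pr[\pi\text{ stops within }2\text{ iterations}]=1-(1-\epsilon)^2\cdot 0.4\cdot 0.01\approx 0.996$. But $T^*=\{5,6\}$ gives $\Pr[\min_{i\in T^*}X_i\le\theta]=1-10^{-4}=0.9999$. So no matching/bookkeeping can rescue the single-swap inequalities you describe: the first-batch picks $b(1),\dots,b(k)$ are simply not comparable to $T^*$ at threshold $\theta_k$.

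This is precisely why the lemma is stated with $2k$ iterations and why the paper's proof uses the \emph{second} batch $B''=\{b(k+1),\dots,b(2k)\}$ together with a relaxed threshold on the algorithm's side. For $j\in\{k+1,\dots,2k\}$ the thresholds are sandwiched, $\theta_k\le\theta_j\le\theta_{2k}$, so with $d^*=\argmax_{d\in T^*\setminus B}\Pr[X_d\le\theta_k]$ one gets the chain $\Pr[X_{d^*}>\theta_k]\ge\Pr[X_{d^*}>\theta_j]\ge\Pr[X_{b(j)}>\theta_j]\ge\Pr[X_{b(j)}>\theta_{2k}]$: the un-queried $T^*$-elements are compared at the \emph{low} threshold $\theta_k$, the greedy picks at the \emph{high} threshold $\theta_{2k}$, and both directions of monotonicity now point the right way. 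Correspondingly, the algorithm's stopping event after $2k$ iterations only needs $\min_{i\in B}X_i\le\theta_{2k}$ (the un-queried left-endpoints are then at least $\ell_{2k+1}$), so the relaxation from $\theta_k$ to $\theta_{2k}$ costs nothing. Your proposal would need this two-threshold sandwich (and hence the second $k$ iterations); as written, restricting to the first $k$ iterations and the single threshold $\theta$ cannot work.
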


\begin{proof} Recall that each iteration $j$ of Algorithm~\ref{pol:greedy} selects two intervals: $j$ in Step~\ref{alg:unit-step-a} and $b(j)$ in Step~\ref{alg:unit-step-b}. 
Let $B = \{b(1),\dots, b(2k)\}$ be the set of intervals chosen by our policy $\pi$ in Step~\ref{alg:unit-step-b} of the first $2k$ iterations. We partition $B$ into  $B'=\{b(1),\dots, b(k)\}$ and $B''=\{b(k+1),\dots, b(2k)\}$. 
Let $d^* = \argmin_{d \in T^* \setminus B} \Pr(X_d > \theta_k)= \argmax_{d \in T^* \setminus B} \Pr(X_d \le \theta_k)$.
\begin{align}
   \Pr\left[\min_{i \in T^*} X_i \, >\, \theta_k\right]  & =  \prod_{i \in T^*} {\Pr[X_i > \theta_k]} \notag \\
   & =  \prod_{i \in T^* \cap B}  \Pr[X_i > \theta_{k}]  \cdot \prod_{i \in  T^* \setminus B} \Pr[X_i > \theta_{k}]  \notag \\
& \ge  \prod_{i \in T^* \cap B} \Pr[X_i  > \theta_{k}] \,\, \cdot\,\, 
\left({\Pr[X_{d^*} > \theta_{k}]}\right)^{|T^* \setminus B|} \label{eq:dstar} \\
   & \ge  \prod_{i \in T^* \cap B} \Pr[X_i > \theta_{2k}]  \cdot \prod_{i \in  B'' \setminus T^*} \Pr[X_i > \theta_{2k}] \label{eq:ge} \\ 
    & \ge \prod_{i \in  B} \Pr[X_i > \theta_{2k}]  = \Pr\left[\min_{i \in B}  X_i \, >\, \theta_{2k}\right]  \label{eq:ge2}
      \end{align}
\eqref{eq:dstar} follows from the definition of $d^*$. \eqref{eq:ge2} just uses that $T^* \cap B$ and $ B'' \setminus T^*$ are disjoint subsets of $B$. The key step above is \eqref{eq:ge}, which we prove using two cases:
\begin{itemize}
    \item Suppose that $T^* \setminus  B = \emptyset$. Then, using   $\theta_k \leq \theta_{2k}$ we obtain $\Pr[X_i > \theta_{k}]\ge \Pr[X_i > \theta_{2k}]$,  which proves  \eqref{eq:ge} for this case.
    \item Suppose that $T^* \setminus  B \ne \emptyset$. In this case, $d^*$ is well-defined. We now claim that: 
    \begin{equation}\label{eq:claim-index-dom}
        \mbox{  For each $j=k+1,\dots, 2k$,     either } b(j)\in T^* \mbox{ or } \Pr[X_{b(j)} > \theta_{2k}] \leq \Pr[X_{d^*} > \theta_{k}]. 
    \end{equation}

Indeed, consider any such $j$ and suppose that $b(j)\not\in T^*$. As $d^*$ is a valid choice for $b(j)$, the greedy rule implies:
$$\Pr\left[X_{d^*} > \theta_j\right] \ge \Pr\left[X_{b(j)} > \theta_j\right]. $$
Further, using the fact that $\theta_k \le \theta_j\le \theta_{2k}$, we get 
$$\Pr\left[X_{d^*} > \theta_k\right] \ge\Pr\left[X_{d^*} > \theta_j\right] \ge \Pr\left[X_{b(j)} > \theta_j\right] \ge \Pr\left[X_{b(j)} > \theta_{2k}\right] , $$
which proves \eqref{eq:claim-index-dom}. Let $h$ denote the number of iterations $j\in \{k+1,\dots, 2k\}$  where $b(j)\not\in T^*$. Note that $h=|B''| - |T^*\cap B''|= k - |T^*\cap B''| \ge |T^*\setminus B|$, where we used $|T^*|=k$. Using \eqref{eq:claim-index-dom}, it follows that $\Pr[X_{i} > \theta_{2k}] \leq \Pr[X_{d^*} > \theta_{k}]$ for all $i\in B''\setminus T^*$. Hence,
\[
 \Pr[X_{d^*} > \theta_{k}]^{|T^*\setminus B|}  \ge \Pr[X_{d^*} > \theta_{k}]^h  \ge  \prod_{i\in B'' \setminus T^*} \Pr[X_{i} > \theta_{2k}], 
\]
Combined with the fact that $\theta_k \leq \theta_{2k}$  (as before), we obtain  \eqref{eq:ge}.
\end{itemize}

We are now ready to complete the proof. Using the \smq stopping criterion and the fact that $\pi$ queries all the intervals in $B$ within $2k$ iterations, 
$$\Pr[\pi \text{ finishes in }2k \text{ iterations}] \ge \Pr\left[\min_{i \in B}  X_i \, \le\, \theta_{2k}\right]   = 1-\Pr\left[\min_{i \in B}  X_i \, >\, \theta_{2k}\right].$$ 
Combined with \eqref{eq:ge2}, 
$$\Pr[\pi \text{ finishes in }2k \text{ iterations}] \ge 1-  \Pr\left[\min_{i \in T^*} X_i \, >\, \theta_k\right]  =  \Pr\left[\min_{i \in T^*} X_i  \le  \theta\right],$$
where we use the definition $\theta=\theta_k$.
 \end{proof}

\noindent

\subsection{Finding the minimum interval}
\label{subsec:unit-smqi} 
In this section, we consider the \smqi problem,   where the goal is to identify  \emph{an interval} that is guaranteed to be a $\delta$-minimizer. 

Unlike the previous \smq setting (where we find a  $\delta$-minimum value), for \smqi  we just want to identify some interval $i^*\in N$ such that $X_{i^*}\le \mv +\delta$. Recall that $\mv=\min_{i\in N} X_i$. It is important to note that the interval $i^*$ may not have been queried.  It is easy to see that any \smq policy is also feasible to \smqi. Indeed, by the stopping rule \eqref{eq:smq-stop-rule} for \smq, the $\delta$-minimum value returned is always the minimum value of a queried interval: so we also identify $i^*$. However, an \smqi policy may return an interval $i^*$ without querying it. So the optimal value of \smqi may be strictly smaller than \smq. 

\paragraph{Remark:} We note that the optimal values of \smq and \smqi differ by at most the maximum query cost $c_{max}$. As noted above, the optimal \smqi value is at most that of \smq. On the other hand, the optimal \smq value is at most the optimal \smqi value plus the cost to query $i^*$. In the unit-cost setting, $c_{max}=1$ and any policy has expected cost at least $1$: so  the optimal  values of \smq and \smqi are within a factor two of each other.  This immediately implies that Algorithm~\ref{pol:greedy}  is also  an $8$-approximation   for unit-cost \smqi. In the rest of this subsection,  we will prove a stronger result, that  Algorithm~\ref{pol:greedy}  is a $4$-approximation for \smqi. Apart from the improved constant factor, these ideas  will also be helpful for \smqi with general costs. We note that under general costs, the optimal \smq and \smqi values may differ by an arbitrarily large factor because $c_{max}$ is not a lower bound on the optimal value.

\paragraph{Stopping criteria for \smqi} 
Consider any state, given by a subset $S\subseteq N$ of queried r.v.s along with their observations $\{x_i\}_{i\in S}$. There are two conditions under which the \smqi policy can stop.
\begin{itemize}
    \item The first stopping rule is just the one for \smq, Equation~\eqref{eq:smq-stop-rule}. This corresponds to the situation that interval $i^*$ is queried. We restate this rule below for easy reference:  
\begin{equation}
    \label{eq:smqi-rule-1}
 \min_{i\in S} x_i\quad \le \quad \min_{j\in N\setminus S} \ell_j \,+\, \delta.
 \end{equation}
  In this case, we return $i^*=\arg\min_{i\in S} x_i$. We refer to this as the old stopping rule.

\item The second stopping rule handles the situation where an un-queried interval $i^*$ is returned. For any $i\in N$, define the ``almost prefix'' set $P_i:=\{j \in N\setminus i : \ell_j < r_i-\delta\}$. Note that either $P_i$ or $P_i\cup i$ is  a {\em prefix} of $[n]$. (As before, we assume that  intervals are indexed by increasing order of their left-endpoint, i.e.,  $\ell_1 \leq \ell_2\leq \dots \leq  \ell_n$.)     The new rule is:
\begin{equation}
    \label{eq:smqi-rule-2}
\exists \, i\in N  \text{ such that } \,  P_i \subseteq S \text{ and } \min_{j\in P_i} \, x_j \,\ge\, r_i-\delta .
\end{equation}
In other words, there is some  interval $i$ where (1) all intervals $j\ne i$ with left-endpoint $\ell_j<r_i-\delta$ have been queried, and (2) the minimum  value of these r.v.s is at least   $r_i-\delta$. In this case, we return $i^*=i$ (we may not know a $\delta$-minimum value).  We refer to this as the new stopping rule. See Figure~\ref{fig:new-stop} for an example. 
\end{itemize}

\begin{prop}
    A policy for \smqi can stop if and only if either criterion~\eqref{eq:smqi-rule-1} or~\eqref{eq:smqi-rule-2} holds. 
\end{prop}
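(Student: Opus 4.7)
The plan is to prove the two directions separately. The ``if'' direction (sufficiency) shows that each stopping criterion correctly identifies a $\delta$-minimizer, and the ``only if'' direction (necessity) shows that if both criteria fail, no interval can be declared a $\delta$-minimizer with probability one.

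For sufficiency of \eqref{eq:smqi-rule-1}, I would simply appeal to the earlier proposition for \smq: the \smq stopping rule guarantees a $\delta$-minimum value, whose argmin interval is a $\delta$-minimizer. For sufficiency of \eqref{eq:smqi-rule-2}, I would verify that the returned interval $i^*=i$ satisfies $X_i\le X_j+\delta$ for every $j\in N$ by splitting into three cases. If $j=i$, this is trivial. If $j\in P_i$, then $j\in S$ and the new stopping rule gives $x_j\ge r_i-\delta\ge X_i-\delta$. If $j\notin P_i\cup\{i\}$, then by definition of $P_i$ we have $\ell_j\ge r_i-\delta$, so $X_j\ge \ell_j\ge r_i-\delta\ge X_i-\delta$. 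Taking the minimum over $j$ yields $X_i\le \mv+\delta$.

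For necessity, I would argue contrapositively: if neither \eqref{eq:smqi-rule-1} nor \eqref{eq:smqi-rule-2} holds, then for any candidate $i^*\in N$ I exhibit a positive-probability realization of the unqueried r.v.s, consistent with current observations, in which $i^*$ is not a $\delta$-minimizer. This uses crucially the standing assumption that $\Pr[X_j=\ell_j]>0$ and $\Pr[X_j=r_j]>0$. Split on whether $i^*\in S$ or not:
\begin{itemize}
    \item If $i^*\in S$, let $j_0\in N\setminus S$ achieve $\ell_{j_0}=\min_{j\in N\setminus S}\ell_j$. Failure of \eqref{eq:smqi-rule-1} gives $x_{i^*}\ge \min_{i\in S}x_i>\ell_{j_0}+\delta$, so the realization $X_{j_0}=\ell_{j_0}$ (positive probability) violates $X_{i^*}\le \mv+\delta$.
    \item If $i^*\notin S$, consider the realization $X_{i^*}=r_{i^*}$ together with appropriate extreme values elsewhere. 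For $i^*$ to be ``safe,'' one needs $X_j\ge r_{i^*}-\delta$ almost surely for every $j\in N\setminus\{i^*\}$, which means $\ell_j\ge r_{i^*}-\delta$ whenever $j\notin S$ (i.e., $P_{i^*}\subseteq S$) and $x_j\ge r_{i^*}-\delta$ for each $j\in P_{i^*}$. These are precisely \eqref{eq:smqi-rule-2}. Since \eqref{eq:smqi-rule-2} fails at $i^*$, either some $j_1\in P_{i^*}\setminus S$ exists (take $X_{j_1}=\ell_{j_1}<r_{i^*}-\delta$) or the observed minimum in $P_{i^*}$ is already $<r_{i^*}-\delta$. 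In either case, combined with $X_{i^*}=r_{i^*}$ (positive probability), we get $X_{i^*}>\mv+\delta$.
\end{itemize}

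The main conceptual obstacle is recognizing that \eqref{eq:smqi-rule-2} is exactly the tight condition under which an unqueried interval $i^*$ can be certified as a $\delta$-minimizer: one must force $X_{i^*}$ to its worst-case upper endpoint $r_{i^*}$ and simultaneously push every other r.v.\ down to its interval's left endpoint, and the almost-prefix set $P_{i^*}$ captures exactly the intervals whose left endpoints could otherwise drop below $r_{i^*}-\delta$. Once this observation is in hand, the case analysis is mechanical; the slightly delicate part is handling the queried intervals outside $P_{i^*}$, where the bound $\ell_j\ge r_{i^*}-\delta$ already implies $x_j\ge r_{i^*}-\delta$ automatically, so they impose no extra constraint beyond \eqref{eq:smqi-rule-2}.
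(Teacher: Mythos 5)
Your proof is correct. The paper states this proposition without a formal proof (it only gives the intuitive reading of the new stopping rule), so your argument supplies the missing details, and it does so along the natural lines the paper's discussion suggests. The sufficiency of~\eqref{eq:smqi-rule-1} correctly reduces to the \smq case; the sufficiency of~\eqref{eq:smqi-rule-2} is a clean three-way split on $j=i$, $j\in P_i$, and $j\notin P_i\cup\{i\}$; and the necessity direction correctly exhibits, for each candidate $i^*$, a positive-probability realization (using the paper's standing assumption that the endpoints $\ell_j, r_j$ carry positive mass, and independence to combine them) under which $i^*$ fails to be a $\delta$-minimizer. You also explicitly handle the one potentially overlooked point: for a queried $j\notin P_{i^*}$, the bound $\ell_j\ge r_{i^*}-\delta$ already forces $x_j\ge r_{i^*}-\delta$, so such intervals impose no constraint beyond what~\eqref{eq:smqi-rule-2} encodes, which is exactly why the rule can be stated in terms of $P_{i^*}$ alone.
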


    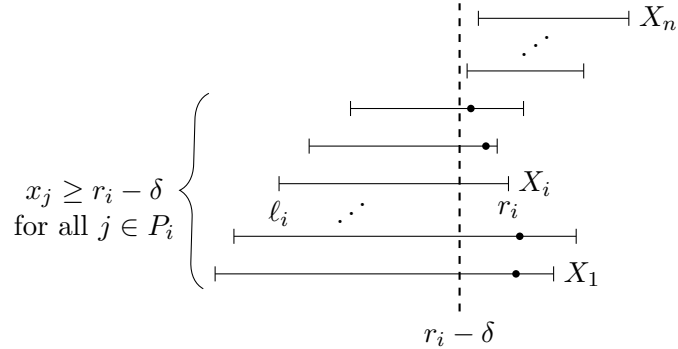
\begin{figure}[!h]
    \centering
    \caption{Illustration of new \smqi stopping criterion. }
    \vspace{-0.5cm}
    \[\begin{tikzpicture}[scale=0.5]

    \draw (0,1)  --  (9,1) node[anchor = west]{$X_1$};
    \draw (0,1.2) -- (0,0.8) node[below] {};
    \draw (9,1.2) -- (9,0.8) node[below] {};
     \fill (8,1) circle (0.1);
    \draw (0.5,2)    -- (9.6,2) node[anchor = west]{};
    \draw (0.5,2.2) -- (0.5,1.8) node[below] {};
    \draw (9.6,2.2) -- (9.6,1.8) node[below] {};
     \fill (8.1,2) circle (0.1);


    \node at (3.6,2.85) {$\iddots$};
 
    \draw (1.7,3.4)  node[below] { }  -- (7.8,3.4) node[anchor = west]{$X_{i}$};
    \draw (1.7,3.6) -- (1.7,3.2) node[below] {$\ell_i$};
    \draw (7.8,3.6) -- (7.8,3.2) node[below] {$r_i$};

    \draw (2.5,4.4)  node[below] { }  -- (7.5,4.4) node[anchor = west]{};
    \draw (2.5,4.2) -- (2.5,4.6) node[below] {};
    \draw (7.5,4.2) -- (7.5,4.6) node[below] {}; 
    \fill (7.2,4.4) circle (0.1);

    \draw (3.6,5.4)  node[below] { }  -- (8.2,5.4) node[anchor = west]{};
    \draw (3.6,5.6) -- (3.6,5.2) node[below] {};
    \draw (8.2,5.6) -- (8.2,5.2) node[below] {};  
    \fill (6.8,5.4) circle (0.1);

    \draw (6.7,6.4)  node[below] { }  -- (9.8,6.4) node[anchor = west]{};
    \draw (6.7,6.6) -- (6.7,6.2) node[below] {};
    \draw (9.8,6.6) -- (9.8,6.2) node[below] {};  
    
    \node at (8.5,7.3) {$\iddots$};
    \draw (7,7.8)  node[below] { }  -- (11,7.8) node[anchor = west]{$X_{n}$};
    \draw (7,8) -- (7,7.6) node[below] {};
    \draw (11,8) -- (11,7.6) node[below] {};
    \draw [decorate,decoration={brace,amplitude=10pt,raise=2pt},yshift=0pt, xshift = 5pt] (-0.35,0.6) -- (-0.2,5.8) node [black,left,xshift= -11 pt , yshift=-44pt, align = center] {   $ {x_j}  \ge r_i - \delta$ \\   for all $j \in P_i$};

\draw[dashed, thick] (6.5, 0) -- (6.5, 8.2);

\node[below, align=center] at (6.5, 0) { $r_i - \delta$};

    \end{tikzpicture}
    \label{fig:new-stop} \]
    \vspace{-1cm}
\end{figure}

Our algorithm for \smqi with unit costs remains the same as for \smq (Algorithm \ref{pol:greedy}). The only difference is in the new stopping criterion (described above). Recall that $\pi$ is the permutation used by our non-adaptive policy. When it is clear from the context, we will also use $\pi$ to denote our \smqi policy that performs queries in the order of $\pi$ until   stopping criteria \eqref{eq:smqi-rule-1} or \eqref{eq:smqi-rule-2} applies.

\def\cG{{\cal G}}
 \def\cE{{\cal E}}
 
\def\cA{{\cal A}}
 \def\cO{{\cal O}}
 
\begin{thm}\label{thm:smqi-unit}
    The non-adaptive policy $\pi$ is a $4$-approximation algorithm for \smqi.
\end{thm}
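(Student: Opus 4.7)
The plan is to mirror the proof of Theorem~\ref{thm:main} via an \smqi-version of the termination-probability comparison in Lemma~\ref{lem:unit-main}. Concretely, I would prove, for every $k\ge 1$,
\[
\Pr[\sigma\text{ stops in }\le k \text{ queries}] \;\le\; \Pr[\pi\text{ stops in }\le 2k \text{ iterations}],
\]
where ``stops'' now means that either the old rule~\eqref{eq:smqi-rule-1} fires or the new rule~\eqref{eq:smqi-rule-2} does. The integration argument in~\eqref{eq:4t} would then be applied verbatim to conclude $c_{exp}(\pi)\le 4\,c_{exp}(\sigma)$.

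To prove the comparison, I would decompose ``$\sigma$ stops in $\le k$ queries'' into two contributions: (a) $\sigma$ triggers the old rule within $k$ queries, and (b) $\sigma$ triggers the new rule but not the old one. Contribution~(a) is handled by essentially the same two-step argument as Lemmas~\ref{lem:exists} and~\ref{lem:lbalg}: the old rule is identical across \smq and \smqi, so the probability that $\sigma$'s \smq criterion is met within $k$ queries is bounded above by the probability that $\pi$'s \smq criterion is met within $2k$ iterations, and that in turn is dominated by the probability that $\pi$ stops within $2k$ iterations.

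For (b), the structural observation is as follows. If $\sigma$ stops via the new rule at some interval $i$ after $\le k$ queries, then $P_i$ must be entirely contained in $\sigma$'s queried set, whence $|P_i|\le k$. Because $P_i$ is either a prefix or an almost-prefix of $[n]$ in left-endpoint order (see the discussion preceding~\eqref{eq:smqi-rule-2}), this forces $P_i \subseteq [k+1] \subseteq [2k]$. On the other hand, Step~\ref{alg:unit-step-a} of Algorithm~\ref{pol:greedy} guarantees that $\pi$'s first $2k$ iterations query every interval in $[2k]$, so $P_i$ also lies in $\pi$'s queried set. Finally, the event ``$\min_{j\in P_i} X_j \ge r_i - \delta$'' is a function only of the realizations of $\{X_j\}_{j\in P_i}$ and not of the policy, so whenever it occurs for $\sigma$, it occurs for $\pi$ at the same interval $i$.

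The main obstacle is combining (a) and (b) cleanly, because the old-rule and new-rule events for $\pi$ are not disjoint, so naive addition overcounts. To address this, I would condition on the complement event $\bar{\mathcal E}$ that $\pi$'s old rule does not fire within $2k$ iterations; this is the concrete r.v. event $\min_{i\in T_{2k}} X_i > \ell_{2k+1} + \delta$, where $T_{2k}$ denotes $\pi$'s (deterministic) first-$2k$ queried set. On $\bar{\mathcal E}$, every $X_j$ with $j\in T_{2k}$ is forced to be ``large,'' and since the new rule also calls for large $X_j$'s, I would prove a stochastic dominance: conditional on $\bar{\mathcal E}$, the joint distribution of $\{X_j\}_{j\in P_i}$ first-order-dominates its conditional distribution along any non-stopping trajectory of $\sigma^k$. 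This is exactly the ``stochastic dominance between the r.v.s in the algorithm and the optimum, conditioned on the \smq criterion not occurring'' alluded to in the Introduction. Combining this dominance with the unconditional old-rule inequality from (a) via the law of total probability would give the desired union-of-events comparison and complete the proof.
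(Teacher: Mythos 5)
Your high-level plan matches the paper's: prove the termination-probability comparison $\Pr[\sigma\text{ stops in }\le k] \le \Pr[\pi\text{ stops in }\le 2k]$, split into old-rule and new-rule contributions, use Lemmas~\ref{lem:exists} and~\ref{lem:lbalg} for the old rule, observe $P_{i^*}\subseteq[k+1]$ and invoke a conditional stochastic dominance for the new rule, then combine. However, your central dominance claim is not quite right as stated, and the place where it breaks is exactly the non-trivial part of the paper's argument. You propose to compare $\pi$'s conditional law of $\{X_j\}_{j\in P_i}$ to ``its conditional distribution along any non-stopping trajectory of $\sigma^k$.'' Fixing a single non-stopping trajectory of $\sigma^k$ pins the queried $X_j$'s to specific observed \emph{values} (not distributions), and a point mass at a large observed value is generally \emph{not} stochastically dominated by $\pi$'s conditioned distribution $X_j \mid X_j>L+\delta$ (take $x_j$ near the top of $I_j$). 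So the trajectory-wise dominance fails. What the paper actually does is condition both sides on the \emph{global product events} $\cG_A=\{X_j>L+\delta,\ \forall j\in N\}$ and $\cG^*_A=\{X_j>\theta,\ \forall j\in N\}$, prove the nontrivial equalities $\Pr[\cA_2\mid \cG]=\Pr[\cA_2\mid \cG_A]$ and $\Pr[\cO_2\mid \cG^*]=\Pr[\cO_2\mid \cG^*_A]$ (which rely on $\cA_2,\cO_2$ being determined by the queried r.v.s and on the trajectory being consistent with ``all observed values exceed the threshold''), and only then apply a monotone-event stochastic-dominance lemma (Lemma~\ref{lem:event-sd}) to the per-coordinate dominance $X_j\mid \cG_A \succeq X_j\mid \cG^*_A$ coming from $\theta\le L+\delta$. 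Your sketch skips the reduction to these product events, which is where the difficulty sits.

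Two smaller issues. First, your $\bar{\mathcal E}$ uses threshold $\ell_{2k+1}+\delta$, but the actual complement of $\cA_1$ is $\{X_i>L+\delta : i\text{ queried by }\pi_{2k}\}$ where $L=\min_{j\notin S_{2k}}\ell_j$ can strictly exceed $\ell_{2k+1}$ (because $\pi$ may also query some $j>2k$ in Step~\ref{alg:unit-step-b}); with the weaker threshold, $\bar{\mathcal E}\supsetneq\neg\cA_1$, and the identity $\Pr[\cA_1\vee\cA_2]=1-(1-\Pr[\cA_2\mid\bar{\mathcal E}])\Pr[\bar{\mathcal E}]$ no longer holds. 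Second, the ``law of total probability'' combination also needs care: one must verify both $\Pr[\cG]\le\Pr[\cG^*]$ (via Lemmas~\ref{lem:exists}/\ref{lem:lbalg}) and $\Pr[\cA_2\mid\cG]\ge\Pr[\cO_2\mid\cG^*]$, and that $\cO_1\subseteq\neg\cG^*$; you gesture at these but the specific choice of $\cG^*$ (threshold $\theta=\ell_{k+1}+\delta$, not the trajectory-dependent $L(\sigma^k)+\delta$) is what makes both inequalities provable simultaneously.
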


We now prove this result. Let $\sigma$ denote an optimal adaptive policy for \smqi. For any $k\ge 1$, we will show:
\begin{equation}
\Pr[\sigma\text{ finishes in }k \text{ queries}]  \leq \Pr[\pi \text{ finishes in 2}k \text{ iterations}].  \label{eq:smqi-key} 
\end{equation}
This would suffice to prove the $4$-approximation, exactly as in Theorem~\ref{thm:main}. 

In order to prove \eqref{eq:smqi-key},  we fix some $k\ge 1$. As in the previous proof, let $\theta=\theta_k=\ell_{k+1}+\delta$ and let $T^*$ be defined as in \eqref{eq:t*}. To reduce notation, define the following events. 
$$ \cA_1 \,:\, \text{    our policy $\pi$   finishes within $2k$ iterations due to~\eqref{eq:smqi-rule-1}.}$$
$$ \cA_2 \,:\, \text{    our policy $\pi$   finishes within $2k$ iterations due to~\eqref{eq:smqi-rule-2}.}$$
$$ \cO_1 \,:\, \text{    optimal policy $\sigma$   finishes within $k$ queries due to~\eqref{eq:smqi-rule-1}.}$$
$$ \cO_2 \,:\, \text{    optimal policy $\sigma$   finishes within $k$ queries due to~\eqref{eq:smqi-rule-2}.}$$

\paragraph{Handling the old stopping criterion.}
Let $L$ denote the smallest un-queried left-endpoint at the end of iteration $2k$ in $\pi$.  Note that $L$ is a deterministic value as $\pi$ is a non-adaptive policy. Moreover, $L\ge \ell_{2k+1}$ as $\pi$ would have queried the first $2k$ r.v.s. Let 
$\cG$ be the event that $X_i>L+\delta$ for all intervals $i$  queried by $\pi$ in its first $2k$ iterations. In other words, $\cG$ is precisely the event that stopping criterion \eqref{eq:smqi-rule-1} {\em does not} apply at the end of iteration $2k$ in $\pi$, i.e., $\cG=\neg \cA_1$. By Lemma~\ref{lem:lbalg}, 
\begin{equation*}
\Pr[\neg \cG] =    \Pr\left[\cA_1\right] \ge \Pr\left[\min_{i \in T^*} X_i  \le  \theta\right] .
\end{equation*}

Similarly, let $\cG^*$ be that event that $X_i>\theta$ for all intervals $i$ in the first $k$ queries of $\sigma$. From the proof of Lemma~\ref{lem:exists}, we obtain $\cO_1\subseteq \neg \cG^*$ and 
\begin{equation*}
\Pr[\neg \cG^*] \le   \Pr\left[\min_{i \in T^*} X_i  \le  \theta\right] .
\end{equation*}
Combining the above two inequalities, we have 
\begin{equation}\label{eq:smqi-old}
    \Pr[\neg \cG^*] \le\Pr[\neg \cG].
\end{equation}

\paragraph{Handling the new stopping criterion.} Let $\cG_A$ be the event that $X_j>L+\delta$ for {\em all} r.v.s $j\in N$. Similarly, let $\cG^*_A$ be the event that $X_j>\theta$ for {\em  all}  $j\in N$. Clearly,
\begin{equation}
    \label{eq:smqi-analyze0}
\Pr\left[\cA_2\, |\, \cG\right]  \,=\, \Pr\left[\cA_2\, |\,  \cG_A\right] \quad \text{and} \quad \Pr\left[\cO_2\, |\, \cG^*\right]  \,=\, \Pr\left[\cO_2\, |\,  \cG^*_A\right]. \end{equation}
  We will now prove that
 \begin{equation}\label{eq:smqi-analyze1}
 \Pr\left[\cA_2\, |\, \cG_A\right] \,\ge\,  \Pr\left[\cO_2\, | \, \cG^*_A\right].
\end{equation} 
If $\sigma$ finishes due to~\eqref{eq:smqi-rule-2} in $k$ queries then the almost-prefix set $P_{i^*}\subseteq  [k+1]$: otherwise   $|P_{i^*}|>k$ which  contradicts with the fact that all r.v.s in $P_{i^*} $ must be queried. Let $R=\{i\in N: P_i\subseteq [k+1]\}$ be all such intervals. It now follows that the event $\cO_2$ (which corresponds to policy $\sigma$) is {\em contained in} the  event
\begin{equation}\label{eq:event-E}
    \cE := \bigvee_{i\in R}  \left( \wedge_{j\in P_i} (X_j \ge r_i-\delta)\right) .
    \end{equation}
Note that $\cE$ is independent of the policy: it only depends on the realizations of the r.v.s (and doesn't depend on whether/not an interval  has been queried).

Moreover, our policy $\pi$ queries all the r.v.s in $[2k] \supseteq [k+1]$ within $2k$ iterations. So, for all $i\in R$, the r.v.s in $P_i\subseteq [k+1]$ are queried by $\pi$ in $2k$ iterations. Hence,  event $\cA_2$ (which corresponds to policy $\pi$) {\em contains}   event $\cE$.

Recall that the event $\cG_A$ (resp. $\cG^*_A$) in policy $\pi$ (resp. $\sigma$) means that every  r.v. is more than $L+\delta$ (resp. $\theta$). Also, $\theta \le L+\delta$, which means 
$$\Pr[X_j \ge u| X_j >L+\delta] \ge \Pr[X_j\ge u | X_j >\theta], \quad \forall u\in \mathbb{R} , \forall j\in N. $$
In other words, for any $j\in N$, if  $Y_j$ (resp. $Z_j$) is the  r.v. $X_j$ conditioned on $\cG_A$ (resp. $\cG^*_A$)
then $Y_j$ {\em stochastically dominates} $Z_j$.\footnote{We say that r.v. $Y$ stochastically dominates $Z$ if $\Pr[Y\ge u]\ge \Pr[Z\ge u]$ for all $u\in \mathbb{R}$.} Note also that the r.v.s $Y_j$s (resp. $Z_j$s) are independent. Using the fact that event $\cE$ corresponds to a {\em monotone} function, we obtain:
\begin{lem} \label{lem:event-sd}
Let $\{Y_j:j\in N\}$ and  $\{Z_j:j\in N\}$ be independent r.v.s such that $Y_j$  stochastically dominates $Z_j$ for each $j\in N$. Then, $\Pr[ \cE(Y_1,...,Y_n ) ] \ge  \Pr[\cE(Z_1,...,Z_n)]$ where event $\cE$ is a function of independent r.v.s as defined in \eqref{eq:event-E}.  
\end{lem}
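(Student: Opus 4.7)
The plan is to combine two elementary facts: (i) the event $\mathcal{E}$ is \emph{monotone non-decreasing} as a function of the vector $(X_1,\dots,X_n)$, and (ii) stochastic dominance between independent r.v.s lifts to a coordinate-wise almost-sure coupling, after which monotonicity transfers the inequality. This is a standard coupling argument; I expect no real obstacle beyond stating it carefully for the particular event \eqref{eq:event-E}.

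First I would verify monotonicity of $\mathcal{E}$. Each atomic statement ``$X_j \ge r_i-\delta$'' is non-decreasing in $X_j$: replacing $X_j$ by any $X'_j\ge X_j$ preserves it. Hence, for each fixed $i\in R$, the conjunction $\bigwedge_{j\in P_i}(X_j\ge r_i-\delta)$ is a non-decreasing function of $(X_1,\dots,X_n)$, and the disjunction over $i\in R$ in \eqref{eq:event-E} is also non-decreasing. In other words, the indicator $f(x_1,\dots,x_n):=\mathbf{1}_{\mathcal{E}(x_1,\dots,x_n)}$ satisfies $f(y)\ge f(z)$ whenever $y_j\ge z_j$ for every $j\in N$.

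Next I would construct the coupling. For each $j\in N$, the stochastic dominance $Y_j\succeq Z_j$ yields (by the standard quantile coupling, e.g., taking $\tilde Y_j=F_{Y_j}^{-1}(U_j)$ and $\tilde Z_j=F_{Z_j}^{-1}(U_j)$ for a uniform $U_j$) a joint law on a pair $(\tilde Y_j,\tilde Z_j)$ with $\tilde Y_j\stackrel{d}{=}Y_j$, $\tilde Z_j\stackrel{d}{=}Z_j$, and $\tilde Y_j\ge \tilde Z_j$ almost surely. Sampling these couplings independently across $j\in N$ (which is legitimate because the $Y_j$'s are independent and the $Z_j$'s are independent) gives a joint distribution on $\bigl((\tilde Y_j,\tilde Z_j)\bigr)_{j\in N}$ with the same independent marginals as the $Y$'s and $Z$'s respectively, and with $\tilde Y_j\ge \tilde Z_j$ for all $j\in N$ almost surely.

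To conclude, by the monotonicity established in the first step, on this coupled probability space we have $\mathbf{1}_{\mathcal{E}(\tilde Y_1,\dots,\tilde Y_n)}\ge \mathbf{1}_{\mathcal{E}(\tilde Z_1,\dots,\tilde Z_n)}$ almost surely. Taking expectations and using that $(\tilde Y_j)\stackrel{d}{=}(Y_j)$ and $(\tilde Z_j)\stackrel{d}{=}(Z_j)$ as independent tuples, we obtain $\Pr[\mathcal{E}(Y_1,\dots,Y_n)]\ge \Pr[\mathcal{E}(Z_1,\dots,Z_n)]$, which is the claim.
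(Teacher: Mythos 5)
Your proof is correct, but it takes a genuinely different route from the paper's. You observe that $\cE$ is a coordinate-wise non-decreasing event and then invoke the standard quantile (Strassen) coupling: for each $j$, dominance of $Y_j$ over $Z_j$ yields a joint pair $(\tilde Y_j,\tilde Z_j)$ with the correct marginals and $\tilde Y_j\ge\tilde Z_j$ almost surely, and sampling these couplings independently across $j$ preserves the independent product structure. Monotonicity of $\cE$ then turns the coordinate-wise almost-sure inequality into an almost-sure inequality of indicators, and taking expectations finishes. The paper instead uses a hybrid (one-coordinate-at-a-time) argument: it sets up a telescoping chain replacing $Z_h$ by $Y_h$ one index at a time, conditions on all other coordinates, observes that the conditioned event collapses to a one-sided threshold event $\{X_h\ge f\}$ for a deterministic $f$ depending on the conditioning, and applies dominance in the form $\Pr[Y_h\ge f]\ge \Pr[Z_h\ge f]$. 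Both arguments rest on the monotonicity of $\cE$, but yours makes that dependence explicit and global, is shorter, and immediately generalizes to any monotone event, while the paper's is more self-contained (no appeal to the coupling lemma) and exhibits concretely how the clause structure of $\cE$ interacts with the conditioning. Either proof is acceptable; your coupling argument is the more standard textbook route.
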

\begin{proof}
It suffices to  prove the following. 
\[ \Pr[ \cE(Y_1,...,Y_h, Z_{h+1}, ...,Z_n) ] \ge  \Pr[\cE(Y_1,...,Y_{h-1}, Z_{h}, ...,Z_n)],\quad \forall h\in [n]. \]
Note that the r.v.s above only differ at position $h$. To keep notation simple, for any $j\in [n]\setminus h$ let $X'_j=Y_j$ if $j<h$ and $X'_j=Z_j$ if $j>h$. So, we need to show $\Pr[\cE(X' , Y_h)] \ge \Pr[\cE(X' , Z_h)]$. 
We  {\em  condition} on the realizations of  the $X'$ r.v.s. For each $j\in [n]\setminus h$ let $t_j$ denote the realization of the  r.v. $X'_j$. 
Having conditioned on these r.v.s, the only randomness is in $Y_h$ and $Z_h$. We will show:
\begin{equation} \label{eq:e-stoch-dom}
\Pr\left[ \cE(X' , Y_h) | X'=t\right]  \ge   \Pr\left[ \cE(X' , Z_h) | X'=t\right] .
\end{equation}
 Using the definition of the event $\cE$ from \eqref{eq:event-E}, let $R(t) = \{i\in R : h\in P_i \text{ and } t_j>r_i-\delta \text{ for all }j\in P_i\setminus h\}$. In other words, $R(t)\subseteq R$ corresponds to those ``clauses'' in  \eqref{eq:event-E} that have not evaluated to true or false based on the realizations $\{ X'_j=t_j : j\in [n]\setminus h\}$. If there is some clause in  \eqref{eq:event-E} that already evaluates to true (based on $t$) then $\cE$ holds regardless of $Y_h$ and $Z_h$. So,  \eqref{eq:e-stoch-dom} holds in this case (both terms are one). Now, we assume that no clause in  \eqref{eq:event-E}  already evaluates to true. We can write  
$$\left\{ \cE(X' , Y_h) | X'=t\right\} \quad =\quad \bigvee_{i \in R(t)  } ( Y_h\ge r_i - \delta) \quad =\quad \left\{ Y_h \ge f \right\}, $$
where $f=  \min_{i\in R(t)} r_i -\delta$ is a deterministic value.\footnote{If $R(t)=\emptyset$ then we set $f=\infty$.} Similarly, we have
$$\left\{ \cE(X' , Z_h) | X'=t\right\} \quad =\quad \left\{ Z_h \ge f \right\}. $$
  Using the fact that $Y_h$ (resp. $Z_h$) is independent of $X'$ and that $Y_h$ stochastically dominates $Z_h$,
$$ \Pr\left[ \cE(X' , Y_h) | X'=t\right] = \Pr[Y_h\ge f] \ge \Pr[Z_h\ge f] = \Pr\left[ \cE(X' , Z_h) | X'=t\right]. $$
This completes the proof of \eqref{eq:e-stoch-dom}. De-conditioning the $X'$ r.v.s, we obtain $\Pr[\cE(X' , Y_h)] \ge \Pr[\cE(X' , Z_h)]$ as desired.
\end{proof}
 
Using Lemma~\ref{lem:event-sd}, we obtain   $\Pr[\cE  | \cG_A] \ge \Pr[\cE | \cG^*_A]$, which proves \eqref{eq:smqi-analyze1}. Combined with \eqref{eq:smqi-analyze0},  
 \begin{equation}\label{eq:smqi-analyze2}
 \Pr\left[\cA_2\, |\, \cG \right] \,\ge\,  \Pr\left[\cO_2\, | \, \cG^* \right].
\end{equation} 

\paragraph{Wrapping up.} We have
\begin{align*}
    \Pr[\cA_1 \vee \cA_2] &=     \Pr[\cA_1] +     \Pr[\cA_2 \wedge \neg \cA_1] = \Pr[\neg \cG] + \Pr[\cA_2 \wedge  \cG]\\
    & = \Pr[\neg \cG] + \Pr[\cA_2 |\cG]\cdot \Pr[\cG] = 1 - \left( 1 -  \Pr[\cA_2 |\cG]\right) \cdot \Pr[\cG]\\
    & \ge 1 - \left( 1 -  \Pr[\cO_2 |\cG^*]\right) \cdot \Pr[\cG^*]\qquad \text{ by \eqref{eq:smqi-old} and \eqref{eq:smqi-analyze2}}\\
    &= \Pr[\neg \cG^*] + \Pr[\cO_2 \wedge  \cG^*]\\
    & \ge \Pr[\cO_1] +     \Pr[\cO_2 \wedge \neg \cO_1] \qquad \text{ using  }\cO_1\subseteq \neg \cG^* \\
    & =    \Pr[\cO_1 \vee \cO_2].
\end{align*}
This completes the proof of \eqref{eq:smqi-key} and the theorem.

\def\i{g}  
\section{Algorithm for  General Costs}

We now consider the \smq problem with non-uniform query costs. We assume (without loss of generality, by scaling) that  costs are at least one, i.e., $\min_{i\in N} c_i\ge 1$. The high-level idea is similar to 
the unit-cost case: interleaving the two greedy criteria of smallest left-endpoint and highest probability of stopping. However, we need to incorporate the costs carefully. To this end, we use an iterative algorithm that in every iteration $\i$, makes  a {\em batch} of queries having total cost about $2^\i$. (In order to optimize the approximation ratio, we use a generic base $\base$ for the exponential costs.)

For any subset $S\subseteq N$, let $c(S):=\sum_{j\in S} c_j$ denote the cost of querying all  intervals in $S$.  Again, we renumber intervals so that $\ell_1\le \ell_2\le \dots \le \ell_n$.

\begin{defn}
For any $\i\ge 0$, let $T_\i$ be the maximal prefix of intervals having cost at most $ {\base}^\i$.
\end{defn}

\begin{algorithm} 
\label{alg:grd-gen}
\caption{Double Greedy for General Cost \label{pol:gengreedy}}
\begin{algorithmic}[1]
\State Let $\ell^* = \min_{j \in N} \ell_j$, $m^* = R:=\min_{j\in N} r_j$,  and $\pi\gets \emptyset$.
\For{$ \i =0, 1, 2, \dots,$} \Comment{iteration}

    \State \label{alg:step-a}  Query intervals $T_\i\setminus \pi$ and update list $\pi \gets \pi \circ T_\i$.
    \State \label{alg:step-theta} Update $\ell^* = \min_{j \in N \setminus \pi }\{\ell_j\}$ and   let threshold $\theta_\i = \ell^* + \delta$. 
    \State \label{alg:step-knap} Compute a $(1,1+\epsilon)$ bicriteria approximate solution  $U_\i$  for: 
    \begin{align}
    p^*_\i = \min_{T \subseteq N \setminus \pi } \left\{\Pr\left[ \min_{j \in T} X_j >  \theta_\i \right] : c(T) \leq {\base}^\i\right\}. \tag{KP}\label{eq:algknap}
    \end{align}
    \State \label{alg:step-b}  Query intervals $U_\i$ and update list $\pi \gets \pi \circ  U_\i$.
    \State Update $\ell^* = \min_{j \in N \setminus \pi }\{\ell_j\}$ and  $m^*=\min\left\{m^*, \min_{j\in T_\i\cup U_\i} X_j\right\}$. 
    \If{$m^* - \ell^* \leq \delta$}
    stop. \EndIf
\EndFor
\end{algorithmic}
\end{algorithm}

The complete algorithm is given in Algorithm~\ref{pol:gengreedy}. 
The optimization problem \eqref{eq:algknap} solved in Step~\ref{alg:step-knap} is a variant of the classic knapsack problem: in Theorem \ref{thm:knapsack} (see Appendix~\ref{app:knapsack}) we provide a  $(1,1+\epsilon)$ bicriteria approximation algorithm for \eqref{eq:algknap} for any constant $\epsilon>0$. In particular, this ensures that $c(U_\i) \leq \base^\i(1+\epsilon)$ and 
$$ \Pr\left[\min_{j \in U_\i} X_j  > \theta_\i\right] \leq p^*_\i .$$ Note that the left-hand-side above equals $\prod_{j \in U_\i}\Pr[ X_j  > \theta_\i]$ as all r.v.s are independent. 

Furthermore, just like Algorithm \ref{pol:greedy}, we can view   Algorithm \ref{pol:gengreedy} as first computing the permutation $\pi$ (without querying) and then performing  queries in that order until the stopping criterion. So, our algorithm is a non-adaptive policy and our analysis also upper-bounds the adaptivity gap. 
 
\subsection{Analysis}
 We use  $\sigma$ to denote the optimal (adaptive) policy and $\pi$ to denote our non-adaptive policy. 
 
 \begin{defn}
\label{def:ai}
    For any $\i \ge 0$, let $ o_\i := \Pr[ \sigma \text{ does not finish by cost }\base^\i]$. 
    Similarly, for our policy we define $v_\i := \Pr[\pi \text{ does not finish by iteration }\i]$.   
We also define $\sigma_\i$ to be the optimal policy truncated at cost $\base^\i$, i.e., the total cost of queried intervals is always at most $\base^\i$. Similarly, we define $\pi_\i$ to be our policy truncated at the end of iteration $\i$.
\end{defn}

The key part of the analysis lies in relating the non-stopping probabilities $o_\i$ and $a_\i$ in the optimal and algorithmic policies: see Lemma~\ref{lem:aioi}. Our first lemma  bounds the (worst-case) cost incurred in $\i$ iterations of our policy.

\begin{lem}
\label{lem:grdup}The cost of our policy until the end of iteration $\i$  is
   \[c(\pi_\i)  \leq (1 + \epsilon) \left(1 + \dfrac{\base}{\base - 1}  \right)\base^\i. \]
\end{lem}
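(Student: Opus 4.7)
The plan is to decompose the total cost of $\pi_\i$ into two parts corresponding to the two types of queries performed in each iteration of Algorithm~\ref{pol:gengreedy}: the ``prefix'' queries added in Step~\ref{alg:step-a} (coming from the sets $T_0, T_1,\dots, T_\i$) and the ``knapsack'' queries added in Step~\ref{alg:step-b} (coming from $U_0, U_1,\dots, U_\i$). I would  bound these two contributions separately and then add them.

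For the prefix queries, the key observation is that $T_0 \subseteq T_1 \subseteq \cdots \subseteq T_\i$ because each $T_j$ is a prefix of the ordering $\ell_1\le \ell_2\le \cdots\le \ell_n$ and the cost budget $\base^j$ is increasing in $j$. Consequently, $T_0 \cup T_1 \cup \cdots \cup T_\i = T_\i$, so the union of all intervals ever added in Step~\ref{alg:step-a} across iterations $0,\dots,\i$ is contained in $T_\i$. By the maximality definition of $T_\i$, the total cost of these prefix queries is at most $c(T_\i)\le \base^\i$. (This is really a telescoping argument: $\sum_{j=0}^\i c(T_j\setminus T_{j-1}) = c(T_\i)$, with the convention $T_{-1}=\emptyset$.)

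For the knapsack queries, the bicriteria guarantee from Theorem~\ref{thm:knapsack} yields $c(U_j)\le (1+\epsilon)\base^j$ for each iteration $j$. Summing this geometric series,
\[
\sum_{j=0}^{\i} c(U_j) \;\le\; (1+\epsilon)\sum_{j=0}^{\i}\base^j \;=\; (1+\epsilon)\cdot\frac{\base^{\i+1}-1}{\base-1} \;\le\; (1+\epsilon)\cdot\frac{\base}{\base-1}\cdot \base^\i.
\]
Combining both contributions,
\[
c(\pi_\i) \;\le\; \base^\i \;+\; (1+\epsilon)\cdot\frac{\base}{\base-1}\cdot \base^\i \;\le\; (1+\epsilon)\left(1+\frac{\base}{\base-1}\right)\base^\i,
\]
where the last step bounds the leading $\base^\i$ by $(1+\epsilon)\base^\i$ to match the stated form.

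There is no real obstacle here: the only subtle point is recognizing that the nested-prefix structure of $\{T_j\}$ causes the Step~\ref{alg:step-a} costs to telescope rather than geometrically accumulate, which is precisely what makes the coefficient $(1+\frac{\base}{\base-1})$ rather than $(\frac{\base}{\base-1}+\frac{\base}{\base-1})$. Everything else is an upper bound of a geometric sum.
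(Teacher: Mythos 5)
Your proof is correct and follows essentially the same argument as the paper: split the cost into Step~\ref{alg:step-a} (prefix) queries bounded by $c(T_\i)\le \base^\i$ via the nested-prefix telescoping, and Step~\ref{alg:step-b} (knapsack) queries bounded by the geometric sum $(1+\epsilon)\sum_{k=0}^\i \base^k$, then add. The only cosmetic difference is that you explicitly absorb the $\base^\i$ term under the $(1+\epsilon)$ factor, a step the paper leaves implicit.
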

\begin{proof}
We    handle separately the costs of intervals queried in Steps \ref{alg:step-a}   and \ref{alg:step-b}. The total cost incurred in Step~\ref{alg:step-a} of the first $\i$ iterations is $c(T_\i)\le \base^\i$: this uses  $\cup_{k=0}^\i T_k  = T_\i $ because  $T_\i$ are prefixes. The total cost due to Step~\ref{alg:step-b} can be bounded using a geometric series:  
$$\sum_{k=0}^\i c(U_k) \le (1+\epsilon) \sum_{k=0}^\i \base^k=(1 + \epsilon) \cdot \dfrac{\base^{\i+1} - 1}{\base - 1}.$$
The inequality above is by the cost guarantee for \eqref{eq:algknap}. The lemma now follows. 
\end{proof}

 \begin{lem} For all $\i \ge 0$, we have $v_\i \le o_\i$. 
 \label{lem:aioi}
 \end{lem}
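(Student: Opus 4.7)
The plan is to mirror the two-step strategy of Lemma~\ref{lem:unit-main}, replacing the query-count budget $k$ by the cost budget $\base^g$. Define $T_g^* \subseteq N$ to be an optimal non-adaptive solution to the fixed-threshold problem with threshold $\theta_g$ (as defined in Step~\ref{alg:step-theta} of Algorithm~\ref{pol:gengreedy}) and cost budget $\base^g$; that is, $T_g^* \in \argmax\{\Pr[\min_{j\in T} X_j \le \theta_g] : T\subseteq N,\ c(T)\le \base^g\}$. I will prove (A) $o_g \ge \prod_{j\in T_g^*}\Pr[X_j>\theta_g]$ by comparing $\sigma$ to $T_g^*$, and (B) $v_g \le \prod_{j\in T_g^*}\Pr[X_j>\theta_g]$ by comparing the queried set $\pi_g := T_g\cup \bigcup_{k\le g}U_k$ to $T_g^*$. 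Together these immediately give the lemma.

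For (A), consider $\sigma_g$, the optimal policy truncated at cost $\base^g$. On any realization where $\sigma_g$ meets the \smq stopping rule~\eqref{eq:smq-stop-rule} with queried set $S$, the maximality of the prefix $T_g$ forces $\{1,\dots,|T_g|+1\}\not\subseteq S$ (else $c(S)>\base^g$), so $\min_{j\in N\setminus S}\ell_j \le \ell_{|T_g|+1}\le \theta_g-\delta$. Plugging into~\eqref{eq:smq-stop-rule} yields $\min_{i\in S}X_i \le \theta_g$, and hence $1-o_g \le \Pr[\min_{i\in \sigma_g}X_i\le \theta_g]$. Viewing $\sigma_g$ as an adaptive cost-$\base^g$ policy for the fixed-threshold problem at level $\theta_g$ and invoking Proposition~\ref{prop-fixed-t} (adaptivity gap one) gives $\Pr[\min_{i\in \sigma_g}X_i\le \theta_g] \le \Pr[\min_{j\in T_g^*}X_j\le \theta_g]$; taking complements and using independence yields (A).

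For (B), since $\pi$ is non-adaptive, by the end of iteration $g$ the queries form the deterministic set $\pi_g = T_g \cup A \cup U_g$, where $A := \bigcup_{k<g}U_k$ and $U_g$ is disjoint from $T_g\cup A$. Because $\pi_g \supseteq T_g\cup A$, the end-of-iteration value satisfies $\ell^* = \min_{j\in N\setminus \pi_g}\ell_j \ge \min_{j\in N\setminus (T_g\cup A)}\ell_j = \theta_g - \delta$. Thus the event ``$\pi$ does not stop by iteration $g$'' is contained in $\{m^* > \ell^*+\delta\}\subseteq \{\min_{j\in \pi_g}X_j > \theta_g\}$, and by independence $v_g \le \prod_{j\in \pi_g}\Pr[X_j>\theta_g]$.

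The main obstacle is the remaining inequality $\prod_{j\in \pi_g}\Pr[X_j>\theta_g] \le \prod_{j\in T_g^*}\Pr[X_j>\theta_g]$. Using the disjoint decomposition $\pi_g = T_g \cup (A\setminus T_g) \cup U_g$, the left-hand product factors as $\prod_{T_g}\cdot \prod_{A\setminus T_g}\cdot \prod_{U_g}$, where each $\prod_S$ abbreviates $\prod_{j\in S}\Pr[X_j>\theta_g]$. The set $T_g^*\setminus(T_g\cup A)$ lies in $N\setminus \pi$ at Step~\ref{alg:step-knap} of iteration $g$ and has cost at most $\base^g$, so it is feasible for~\eqref{eq:algknap}; the bicriteria guarantee therefore gives $\prod_{U_g} \le p_g^* \le \prod_{T_g^*\setminus (T_g\cup A)}$. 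Separately, since every factor lies in $[0,1]$ and $T_g \supseteq T_g^*\cap T_g$ and $A\setminus T_g \supseteq T_g^*\cap A\setminus T_g$, we get $\prod_{T_g} \le \prod_{T_g^*\cap T_g}$ and $\prod_{A\setminus T_g} \le \prod_{T_g^*\cap A\setminus T_g}$. Multiplying the three bounds and using that $T_g^*\cap T_g$, $T_g^*\cap A\setminus T_g$, and $T_g^*\setminus (T_g\cup A)$ partition $T_g^*$ yields $\prod_{j\in \pi_g}\Pr[X_j>\theta_g] \le \prod_{j\in T_g^*}\Pr[X_j>\theta_g]$. Combining with (A) completes the proof that $v_g \le o_g$.
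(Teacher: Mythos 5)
Your proof is correct and takes essentially the same approach as the paper: both reduce to the fixed-threshold problem at level $\theta_\i$ and budget $\base^\i$ via Proposition~\ref{prop-fixed-t}, and both compare the queried set $\pi_\i = T_\i \cup \bigcup_{k\le\i} U_k$ against the optimal budget-$\base^\i$ set by splitting off $U_\i$ and applying the bicriteria knapsack guarantee. The only cosmetic difference is that you name the optimal non-adaptive set $T_\i^*$ explicitly and decompose $\pi_\i$ into three pieces, whereas the paper inlines the minimization and uses only $\pi' = T_\i \cup \bigcup_{k<\i} U_k$ and $U_\i$; the underlying inequalities are the same.
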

\begin{proof}
Recall that  $\sigma_\i$ denotes the optimal policy truncated at cost $\base^\i$. We let $L(\sigma_\i) = \min_{j \in  N \setminus {\sigma_\i} }{\{\ell_j\}}$ be the smallest un-queried left-endpoint: this is a random value as $\sigma_\i$ is adaptive. In the algorithm, consider iteration $\i$ and let $L(T_{\i})= \min_{j \in N\setminus T_{\i}}\{\ell_j\}$; note that the threshold $\theta_\i\ge L(T_{\i}) +\delta$ in Step~\ref{alg:step-theta}. Let   $\pi'=\pi_{\i-1}\circ T_\i$ denote the list after Step~\ref{alg:step-a} in iteration $\i$. Note   that the optimization in \eqref{eq:algknap} of iteration $\i$ is over $T\subseteq N\setminus \pi'$, which yields $U_\i$. Also, $\pi_\i=\pi'\cup U_\i$.
\begin{align}
   o_\i &= \Pr \left[\sigma \text{ does not finish within cost }\base^\i \right]\notag \\
    &= \Pr \left[ \min_{j \in \sigma_\i} X_j  > L(\sigma_\i) + \delta \right]  \ge \Pr \left[ \min_{j \in \sigma_\i} X_j  > L(T_\i) + \delta \right] \label{eq:li}  \\ 
    &\ge \Pr \left[ \min_{j \in \sigma_\i} X_j  > \theta_\i \right]  = 1- \Pr \left[ \min_{j \in \sigma_\i} X_j  \le \theta_\i \right]  \label{eq:opt-theta}  \\
    & \ge 1-  \max_{T \subseteq N, c(T) \leq \base^\i}{\Pr\left[ \min_{j \in {T}} X_j  \le \theta_\i \right]}  = \min_{T \subseteq N, c(T) \leq \base^\i} {\Pr\left[ \min_{j \in {T}} X_j  > \theta_\i \right]} \label{eq:statpol}\\
    & = \min_{T \subseteq N, c(T) \leq \base^\i}{\prod_{j\in T} \Pr\left[ X_j  > \theta_\i \right]} 
     \ge  \prod_{j \in \pi'}\Pr[ X_j  > \theta_\i ] \cdot \min_{T \subseteq N \setminus \pi', c(T) \leq \base^\i}
        \prod_{j\in T} \Pr[ X_j  >  \theta_\i ] \label{eq:stat} \\ 
    & =  \prod_{j \in \pi'}\Pr[ X_j  > \theta_\i ]  \cdot p_\i^* =  \Pr\left[ \min_{j \in \pi'} X_j > \theta_\i \right]  \cdot p_\i^* \label{eq:p} \\
    & \ge  {\Pr\left[ \min_{j \in \pi'} X_j > \theta_\i \right]} \cdot {\Pr\left[ \min_{j \in {U_\i}} X_j  > \theta_\i \right]}  =   {\Pr\left[ \min_{j \in \pi_{\i}} X_j  > \theta_\i \right]} \ge v_\i \label{eq:knap} 
\end{align}
The equality in \eqref{eq:li} is given by the definition of $L(\sigma_\i)$ and the stopping rule. The inequality in \eqref{eq:li} uses the fact that   $L(\sigma_\i)  \leq  L(T_\i)$ always, which in turn is because $\sigma_\i$ has cost at most $\base^\i$ and $T_\i$ is the maximal prefix within this cost.  The inequality in \eqref{eq:opt-theta} uses $\theta_\i\ge L(T_\i)+\delta$. The inequality in \eqref{eq:statpol} is by Proposition~\ref{prop-fixed-t}: we view $\sigma_\i$ as a feasible adaptive policy for the fixed-threshold problem with threshold $\theta_\i$ and budget $y^\i$. The equality in \eqref{eq:stat} follows from independence of the random variables. The first equality in \eqref{eq:p} uses the definition of $p_\i^*$ from \eqref{eq:algknap} and independence. The first inequality in \eqref{eq:knap} uses the choice of $U_\i$ and   Theorem~\ref{thm:knapsack}. The   equality in \eqref{eq:knap} is by $\pi_\i = \pi' \cup U_\i$. To see the last inequality in \eqref{eq:knap}, note that   if $\min_{j \in \pi_{i}}\{X_j\} \le \theta_\i$ then $\pi$   finishes by iteration $\i$. 
\end{proof}

In Lemma \ref{lem:optlb} we lower bound the expected cost of the optimal policy.  Let $c_{exp}(\pi)$ and $c_{exp}(\sigma)$ denote the expected cost of our greedy policy and the optimal policy, respectively.

\begin{lem}
\label{lem:optlb}
For any base   $y \ge 1$, we have $\sum_{\i\ge 0}{\base^\i \cdot o_\i} \leq  \frac{\base }{\base-1} c_{exp}(\sigma) - \frac1{y-1}$.   
\end{lem}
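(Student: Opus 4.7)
The plan is to view $c_{exp}(\sigma)$ as the integral of the survival function $F(t) := \Pr[C_\sigma > t]$, where $C_\sigma$ denotes the (random) total cost paid by $\sigma$. Then $o_\i = F(y^\i)$ by definition, and the lemma becomes a comparison between the geometric sum $\sum_{\i \geq 0} y^\i F(y^\i)$ and the integral $\int_0^\infty F(t)\, dt = c_{exp}(\sigma)$.

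I would proceed via a ``staircase'' argument based on the monotonicity of $F$. Split
\[
\int_0^\infty F(t)\, dt \;=\; \int_0^1 F(t)\, dt \;+\; \sum_{\i \geq 1} \int_{y^{\i-1}}^{y^\i} F(t)\, dt.
\]
Because $F$ is non-increasing, for $t \in [y^{\i-1}, y^\i]$ we have $F(t) \geq F(y^\i) = o_\i$, so $\int_{y^{\i-1}}^{y^\i} F(t)\, dt \geq y^{\i-1}(y-1)\, o_\i$. Summing over $\i \geq 1$ gives
\[
\int_1^\infty F(t)\, dt \;\geq\; \frac{y-1}{y} \sum_{\i \geq 1} y^\i o_\i.
\]

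The main subtlety, and what accounts for the $-\tfrac{1}{y-1}$ term on the right-hand side of the lemma, is handling the $\i=0$ term in the sum. Here I would invoke the normalization that every $c_j \geq 1$: as long as $\sigma$ must make at least one query---i.e., in any non-trivial instance where the stop rule does not already hold at the dummy interval $X_0$---we have $C_\sigma \geq 1$ almost surely, hence $F(t) = 1$ for every $t \in [0,1)$, giving $\int_0^1 F(t)\, dt = 1$. (The degenerate case is uninteresting since then $\sigma$ and $\pi$ both stop at iteration $0$.)

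Combining the two bounds yields $c_{exp}(\sigma) \geq 1 + \tfrac{y-1}{y} \sum_{\i \geq 1} y^\i o_\i$. Using $o_0 \leq 1$ and then adding the $\i = 0$ term,
\[
\sum_{\i \geq 0} y^\i o_\i \;=\; o_0 + \sum_{\i \geq 1} y^\i o_\i \;\leq\; 1 + \frac{y}{y-1}\bigl(c_{exp}(\sigma) - 1\bigr) \;=\; \frac{y}{y-1}\, c_{exp}(\sigma) - \frac{1}{y-1},
\]
which is precisely the claim. The only delicate step is the treatment of the interval $[0,1)$: it contributes the full value $1$ (not merely $o_0$) to $c_{exp}(\sigma)$, and this is exactly what pays for the $-\tfrac{1}{y-1}$ constant; without the lower bound $c_j \geq 1$ the inequality would be weaker by this amount.
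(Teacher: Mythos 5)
Your proof is correct and is the layer-cake dual of the paper's argument: the paper bounds $\sum_{g\ge 0} y^g\,\mathbf{1}(Z>y^g) \le \frac{y}{y-1}Z - \frac{1}{y-1}$ pointwise for each realization $Z$ of the optimal cost and then takes expectations, whereas you compare the same sum to $\int_0^\infty \Pr[C_\sigma > t]\,dt$ via a staircase bound---by Fubini these are the same argument. You also make explicit, where the paper leaves it implicit, that the $-\tfrac{1}{y-1}$ is paid for by the interval $[0,1)$ on which the survival function equals $1$ thanks to $c_j\ge 1$; the paper's case split ``$y^k<Z\le y^{k+1}$ for some integer $k\ge0$'' silently relies on the same normalization.
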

\begin{proof}

    Let $Z$ denote the random variable that represents the cost of the optimal policy $\sigma$: so $c_{exp}(\sigma)=\mathbb{E}[Z]$. Let $\mathbf{1}(Z > \base^\i)$ be the indicator variable for when $Z> \base^\i$; so    $\mathbb{E}[I(Z > \base^\i)]= o_\i$. We now show that:
       \begin{align}
         \sum_{\i\ge 0}{\base^\i \cdot  \mathbf{1}(Z > \base^\i)} &  \le \frac{\base }{\base - 1}Z -\frac1{y-1} \label{eq:sum} 
    \end{align}
To see this, suppose that  $ \base^k < Z \le \base^{k+ 1}$ for some integer $k\ge 0$. Then the left-hand-side of 
\eqref{eq:sum} equals
\[\sum_{\i = 0}^{k}{\base^\i} = \frac{y^{k+1} -1}{y-1} \le    Z \frac{y}{y-1}  - \frac1{y-1} ,\] 
which proves \eqref{eq:sum}. Taking the expectation of \eqref{eq:sum} proves the lemma.    
\end{proof}

\begin{thm}
\label{thm:gencost}

    There is a $( 3+ 2\sqrt{2} + \epsilon)$-approximation for the \smq problem with general costs.  
\end{thm}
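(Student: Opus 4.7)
The plan is to combine the three preparatory lemmas into a single per-iteration inequality and then optimize over the base $y$. First, I would express the expected cost of $\pi$ in terms of the random stopping iteration $K$ of $\pi$ (where $\Pr[K>\i]=v_\i$). By Lemma~\ref{lem:grdup}, whenever $\pi$ halts at iteration $K=k$, the total cost is at most $(1+\epsilon)\bigl(1+\tfrac{y}{y-1}\bigr)y^k$, so
\[
c_{exp}(\pi)\;\le\;(1+\epsilon)\Bigl(1+\tfrac{y}{y-1}\Bigr)\,\mathbb{E}[y^K].
\]

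Second, I would use the simple identity $y^K = 1+(y-1)\sum_{\i\ge 0} y^\i\,\mathbf{1}(K>\i)$ and take expectations to get
\[
\mathbb{E}[y^K] \;=\; 1 \;+\; (y-1)\sum_{\i\ge 0} y^\i\, v_\i.
\]
Then I would apply Lemma~\ref{lem:aioi} termwise ($v_\i\le o_\i$) and invoke Lemma~\ref{lem:optlb} to bound the tail sum in terms of the optimal expected cost:
\[
\mathbb{E}[y^K]\;\le\; 1 + (y-1)\sum_{\i\ge 0} y^\i o_\i \;\le\; 1 + (y-1)\left(\tfrac{y}{y-1}\,c_{exp}(\sigma) - \tfrac{1}{y-1}\right) \;=\; y\cdot c_{exp}(\sigma).
\]
Combining the two displayed bounds yields
\[
c_{exp}(\pi)\;\le\;(1+\epsilon)\,\frac{y(2y-1)}{y-1}\,c_{exp}(\sigma).
\]

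The remaining work (and essentially the only non-routine step) is to pick the base $y>1$ that minimizes $f(y):=y(2y-1)/(y-1)$. Differentiating gives the first-order condition $2y^2-4y+1=0$, whose root with $y>1$ is $y^*=1+\tfrac{\sqrt 2}{2}$. Plugging $y^*$ back in, one computes $(2+\sqrt 2)(1+\sqrt 2)/\sqrt 2 = (4+3\sqrt 2)/\sqrt 2 = 3+2\sqrt 2$, so $f(y^*)=3+2\sqrt 2$. Absorbing the $(1+\epsilon)$ factor into $\epsilon$ (by running Algorithm~\ref{pol:gengreedy} with $\epsilon':=\epsilon/(3+2\sqrt 2)$ in the bicriteria knapsack routine) yields the claimed $(3+2\sqrt 2 +\epsilon)$-approximation.

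I expect no serious obstacle beyond careful bookkeeping: Lemmas~\ref{lem:grdup}, \ref{lem:aioi}, \ref{lem:optlb} already do the real work, and the only subtlety is recognizing the telescoping identity $y^K=1+(y-1)\sum_\i y^\i \mathbf{1}(K>\i)$ that lets one swap a pointwise worst-case cost bound for an expectation comparison via $v_\i\le o_\i$, after which the optimization of $y$ is a one-variable calculus exercise.
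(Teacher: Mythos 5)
Your proposal is correct and follows essentially the same route as the paper: both bound $c_{exp}(\pi)$ via the per-iteration cost from Lemma~\ref{lem:grdup}, reduce $\sum_\i y^\i v_\i$ to $\sum_\i y^\i o_\i$ via Lemma~\ref{lem:aioi}, invoke Lemma~\ref{lem:optlb}, and optimize the resulting $y(2y-1)/(y-1)$ to get $3+2\sqrt{2}$ at $y^* = 1+1/\sqrt{2}$. Your identity $y^K = 1+(y-1)\sum_{\i\ge 0} y^\i\mathbf{1}(K>\i)$ is just a clean repackaging of the paper's Abel-summation step $\sum_{\i\ge 1} y^\i(v_{\i-1}-v_\i) = v_0 + (y-1)\sum_{\i\ge 0} y^\i v_\i$, so the two write-ups are equivalent.
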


\begin{proof} By Lemma~\ref{lem:grdup}, we have   $c_{exp}(\pi)  \leq (1 + \epsilon)  \left(1+ \frac{\base}{\base-1}  \right) \sum_{\i \ge 1}{{\base}^\i(v_{\i-1} - v_\i)}$. Now, 
    \begin{align}
        \sum_{\i \ge 1} {\base}^\i(v_{\i-1} - v_\i) & = v_0 + (y-1) \sum_{\i \ge 0} {\base}^\i v_{\i}  \le 1 + (y-1) \sum_{\i \ge 0} {\base}^\i o_{\i} \label{eq:bylem}\\
        &  \le 1+(y-1) \sum_{\i \ge 0} {\base}^\i o_{\i}  \leq 1+ y\cdot c_{exp}(\sigma) - 1 = y\cdot c_{exp}(\sigma) \label{eq:bylem3}
    \end{align}
The inequality in \eqref{eq:bylem} is by Lemma \ref{lem:grdup} and $v_0=1$.  The first inequality in  \eqref{eq:bylem3} uses $y\ge1$ and the second inequality is by Lemma \ref{lem:optlb}. 

Hence, we obtain $c_{exp}(\pi)  \leq (1 + \epsilon)  y\cdot \left(1+ \frac{\base}{\base-1}   \right) \cdot  c_{exp}(\sigma)$. Now, optimizing for $y$, we obtain the stated approximation ratio.
\end{proof}

\section{\smqi under Non-uniform Costs}
We now consider the (harder) problem of identifying a $\delta$-minimum interval. Recall that the goal here is  to identify some interval $i^*\in N$ such that $X_{i^*}\le \mv +\delta$ where  $\mv=\min_{i\in N} X_i$. The interval $i^*$ may not have been queried by the policy. Unlike the unit-cost case, we can no longer rely on the \smq algorithm itself (see the example below). 

\paragraph{Bad example for the \smq policy.} Consider an instance with the following r.v.s.
\begin{itemize}
    \item $X_1$ is distributed over the interval $[0,1.5\,\delta]$. (The exact distribution is irrelevant.)
    \item $X_2$ has    $\Pr[X_2=0.3\,\delta] =\frac1{n^2} $ and $\Pr[X_2=2\,\delta] =1-\frac1{n^2}$.
    \item $X_3,\dots, X_n$ are identically distributed with $\Pr[X_i=0.7\,\delta] = \frac1n $ and $\Pr[X_i=1.5\,\delta] = 1-\frac1n $.
\end{itemize}
The cost $c_1=n\gg 1$ and all other costs are unit. The \smq policy from Algorithm~\ref{pol:gengreedy} will first select at least $\Omega(n)$ r.v.s among $X_3,\dots, X_n$ because these will optimize \eqref{eq:algknap}. Crucially, the policy will not query $X_1$ or $X_2$ for a long time. Consequently, the expected cost of this policy is $\Omega(n)$. On the other hand, an optimal policy just  queries $X_2$: if $X_2=2\,
\delta$ then it returns $i^*=1$ (stopping rule~\eqref{eq:smqi-rule-2} applies); if $X_2=0.3\,\delta$ then it returns $i^*=2$ (stopping rule~\eqref{eq:smqi-rule-1} applies). So the \smq algorithm has an $\Omega(n)$ approximation ratio when applied directly to \smqi.

This example shows that in the presence of non-uniform  costs,  additional work is needed to handle the new stopping criterion ~\eqref{eq:smqi-rule-2}. In particular, we need to skip  expensive intervals while querying in the order of left-endpoints. The \smqi algorithm for general costs has the same high-level structure as the one for \smq  (Algorithm~\ref{pol:gengreedy}). The only change is in Step~\ref{alg:step-a} where we modify the queried set $T_\i$ 
by skipping some expensive intervals. We first define these new ``almost prefix'' sets that will be used in  Step~\ref{alg:step-a}.

\def\t{S}
\begin{defn}
    For any iteration $\i\ge 0$,  an interval $j\in N$ is called {\bf $\i$-big} if its cost $c_j>\base^\i$ (otherwise, $j$ is called $\i$-small).  
\end{defn}
Recall that the intervals are numbered according to their left-endpoint, i.e., $\ell_1\le \ell_2\le \dots \le  \ell_n$. 
\begin{defn}
\label{def:S}
    Consider any iteration $\i\ge 0$. 
    \begin{itemize}
        \item Let $a_\i$ and $b_\i$ denote the  first and second $\i$-big intervals, respectively. 
    \item $A_\i:=\{1,2,\dots, a_\i-1\}$ is the maximal prefix of $[n]$  that does not contain any $\i$-big interval.
    \item  $B_\i:=\{a_\i+1,\dots, b_\i-1\}$ is the segment of $[n]$ between the  first and second $\i$-big intervals.
    \end{itemize}
    We now define the almost-prefix query set $\t_\i$ as follows:
\begin{enumerate}
    \item If $c(A_\i)> y^\i$ then $S_\i$ is the maximal prefix of $A_\i$ having cost at most $2y^\i$. Here, $\t_\i\subseteq A_\i$.
    \item    If $c(A_\i)\le  y^\i$ then $S_\i$ is the maximal prefix of $A_\i\cup B_\i$ having cost at most $y^\i$. Here, $\t_\i\supseteq A_\i$.
\end{enumerate}
    \end{defn}

\begin{figure}[h!]
    \centering
      \caption{Illustration of Definition \ref{def:S}.  }
\includegraphics[width=\textwidth]{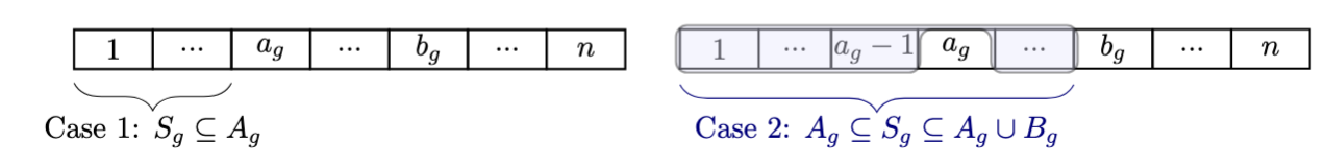}
\end{figure}
    
\paragraph{\smqi algorithm.} This involves replacing the ``prefix set'' $T_\i $ in Step~\ref{alg:step-a} of the \smq algorithm (Algorithm~\ref{pol:gengreedy}) by the almost-prefix set $\t_\i$ defined above. The  other steps  remain the same as in Algorithm~\ref{pol:gengreedy}.  The stopping criterion also changes: we will perform queries in the order of $\pi$ until either \eqref{eq:smqi-rule-1} or \eqref{eq:smqi-rule-2} applies.  We start with a useful lemma showing that the almost-prefix sets $\t_\i$ are nested (as was the case for the sets $T_\i$). We note that this lemma is just needed to obtain a tighter constant factor. 

\begin{lem}\label{lem:smqi-nested}
    For each $\i\ge 0$, we have $\t_\i \subseteq \t_{\i+1}$. 
\end{lem}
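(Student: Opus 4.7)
The plan is to perform a case analysis according to which case of Definition~\ref{def:S} determines each of $S_i$ and $S_{i+1}$. First, I would record a basic monotonicity fact: since $y \ge 1$, any $(i+1)$-big interval is also $i$-big, so $a_i \le a_{i+1}$, $b_i \le b_{i+1}$, and consequently $A_i \subseteq A_{i+1}$ and $B_i \subseteq B_{i+1}$. Moreover, every interval in $B_i$ is $i$-small, hence also $(i+1)$-small, so if $a_i$ itself is not $(i+1)$-big then the first $(i+1)$-big interval satisfies $a_{i+1} \ge b_i$.

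I would then split into four cases based on whether $c(A_i) > y^i$ and whether $c(A_{i+1}) > y^{i+1}$. Three of the four are short. If both are in Case~1 of the definition, then $S_i$ is a prefix of $A_i \subseteq A_{i+1}$ with $c(S_i) \le 2y^i \le 2y^{i+1}$, so $S_i \subseteq S_{i+1}$ by maximality of $S_{i+1}$. If $S_i$ is in Case~1 while $S_{i+1}$ is in Case~2, then the maximal prefix of $A_{i+1} \cup B_{i+1}$ with cost $\le y^{i+1}$ must cover all of $A_{i+1}$ (since $c(A_{i+1}) \le y^{i+1}$), so $S_{i+1} \supseteq A_{i+1} \supseteq A_i \supseteq S_i$. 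If both are in Case~2, I would further split on whether $a_i$ is $(i+1)$-big: if yes, then $a_{i+1}=a_i$, $B_{i+1} \supseteq B_i$, and $S_i$ is a prefix of $A_{i+1} \cup B_{i+1}$ of cost $\le y^i \le y^{i+1}$; if no, then $A_{i+1} \supseteq A_i \cup \{a_i\} \cup B_i \supseteq S_i$ and $A_{i+1} \subseteq S_{i+1}$ as just noted.

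The main obstacle is the remaining case, where $S_i$ is in Case~2 but $S_{i+1}$ is in Case~1. Here $S_i = A_i \cup B_i'$ for a prefix $B_i' = \{a_i+1,\dots,k\}$ of $B_i$ (possibly empty), and the difficulty is that $S_i$ skips the index $a_i$ whereas any prefix of $A_{i+1}$ must include $a_i$. The key steps are: (i) from $c(A_{i+1}) > y^{i+1} \ge y^i \ge c(A_i)$, deduce $A_i \subsetneq A_{i+1}$, so $a_i < a_{i+1}$, hence $a_i$ is not $(i+1)$-big and $c_{a_i} \le y^{i+1}$; (ii) apply the monotonicity fact to get $k \le b_i - 1 < a_{i+1}$, so the ``filled-in'' prefix $\{1,\dots,k\}$ lies inside $A_{i+1}$; and (iii) bound its cost by
\[ c(\{1,\dots,k\}) \;=\; c(A_i) + c_{a_i} + c(B_i') \;\le\; y^i + y^{i+1} \;\le\; 2y^{i+1}, \]
using $c(A_i) + c(B_i') = c(S_i) \le y^i$ together with $y \ge 1$. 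Then the maximality of $S_{i+1}$ gives $\{1,\dots,k\} \subseteq S_{i+1}$, and hence $S_i \subseteq \{1,\dots,k\} \subseteq S_{i+1}$, which completes the proof.
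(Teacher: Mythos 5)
Your proof is correct and takes essentially the same route as the paper's: it performs the identical case analysis on which clause of Definition~\ref{def:S} defines $S_i$ and $S_{i+1}$, uses the same monotonicity facts ($a_i \le a_{i+1}$, $b_i \le b_{i+1}$, $a_{i+1}\ge b_i$ when $a_i$ is not $(i+1)$-big), and handles the delicate Case~2/Case~1 situation by the same ``fill in the index $a_i$'' trick (your prefix $\{1,\dots,k\}$ is exactly the paper's $\{a_i\}\cup S_i$, with the same cost bound $y^i + y^{i+1} \le 2y^{i+1}$). The only cosmetic difference is that you organize the outer split by the pair $(c(A_i),c(A_{i+1}))$ rather than by $c(A_i)$ followed by whether $a_i$ is $(i+1)$-big, but the resulting subcases and arguments coincide.
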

\begin{proof}  First, suppose that $c(A_\i)> y^\i$ (corresponds to case 1 in Definition~\ref{def:S}). Then, set $\t_\i\subseteq A_\i$ is a prefix of cost at most $2\base^\i$.      
   Clearly, the first $(\i+1)$-big interval $a_{\i+1}\ge a_\i$, so $A_\i\subseteq A_{\i+1}$.  
   \begin{itemize}
    \item  
   If $c(A_{\i+1})\le \base^{\i+1}$ then $\t_{\i+1}\supseteq A_{\i+1}\supseteq A_\i \supseteq \t_\i$. 
   \item If $c(A_{\i+1})>\base^{\i+1}$ then $\t_{\i+1}$ is the maximal prefix of $A_{\i+1}\supseteq A_\i$ of cost at most $2\base^{\i+1}> 2\base^\i$: so  we must have $\t_\i\subseteq \t_{\i+1}$.
   \end{itemize}

    Now, suppose that $c(A_\i)\le y^\i$ (corresponds to case 2 in  Definition~\ref{def:S}). Here, $\t_\i\supseteq A_\i$ and is a prefix of $A_\i\cup B_\i$ with $c(\t_\i)\le \base^\i$. 
    If $a_\i$ is also $(\i+1)$-big then $A_{\i+1}=A_\i$ and $c(A_{\i+1})\le y^{\i+1}$: so $\t_{\i+1}$ corresponds to case 2 in  Definition~\ref{def:S}. Also, the second $(\i+1)$-big interval $b_{\i+1}\ge b_\i$: so, $B_{\i+1}\supseteq B_\i$. Hence,  $\t_{\i+1}$ is the maximal prefix of $A_\i\cup B_{\i+1}\supseteq A_\i\cup B_\i$ of cost at most $y^{\i+1}$. So, we must have $\t_{\i+1}\supseteq \t_\i$.  
 If $a_\i$ is not $(\i+1)$-big then we have $\base^\i < c(a_\i)\le \base^{\i+1}$ and the first $(\i+1)$-big interval  $a_{\i+1}\ge b_\i$, i.e., $A_{\i+1}\supseteq  A_\i\cup \{a_\i\} \cup B_\i  $. 
    \begin{itemize}
    \item  
If $c(A_{\i+1})>\base^{\i+1}$ then $\t_{\i+1}$ corresponds to case 1 in    Definition~\ref{def:S}.   Consider the prefix $\{a_\i\}\cup \t_\i$: it has cost at most $\base^{\i+1} + \base^\i< 2\, \base^{\i+1}$. So, we must have $\t_{\i+1}\supseteq \{a_\i\}\cup \t_\i$.

\item If $c(A_{\i+1})\le \base^{\i+1}$ then $\t_{\i+1}$ corresponds to case 2 in    Definition~\ref{def:S}. Here, $\t_{\i+1}\supseteq A_{\i+1} \supseteq A_\i\cup \{a_\i\} \cup B_\i \supseteq \t_\i$.
   \end{itemize} 
   In all cases, we have $\t_\i \subseteq \t_{\i+1}$. 
\end{proof}

The rest of the analysis combines ideas from the non-uniform cost \smq and the uniform cost \smqi. Let $\pi$ denote our (non-adaptive) policy and $\sigma$ the optimal adaptive policy. We re-use the terms from Definition~\ref{def:ai}: 
  \[ o_\i := \Pr[ \sigma \text{ does not finish by cost }\base^\i] . \]
    \[ v_\i := \Pr[\pi \text{ does not finish by iteration }\i] .\]
    \[\sigma_\i \,\,  \text{ is the optimal policy truncated at cost }\base^\i\]
    \[\pi_\i \,\,  \text{ is our policy truncated at the end of iteration  } \i.\]

\begin{lem}
\label{lem:smqi-cost-i}The cost of our policy until the end of iteration $\i$  is
   \[c(\pi_\i)  \leq (1 + \epsilon) \left(2 + \dfrac{\base}{\base - 1}  \right)\base^\i. \]
\end{lem}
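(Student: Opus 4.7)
My approach is to mirror the proof of Lemma~\ref{lem:grdup} for the \smq setting, separately bounding the cost contributions of Step~\ref{alg:step-a} (now querying $S_k$ instead of $T_k$) and Step~\ref{alg:step-b} (querying $U_k$) across the first $\i$ iterations. Since the algorithm for \smqi differs from Algorithm~\ref{pol:gengreedy} only in replacing $T_k$ by the almost-prefix set $S_k$, the Step~\ref{alg:step-b} analysis carries over essentially verbatim, leaving the Step~\ref{alg:step-a} cost as the only quantity needing fresh attention.

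For Step~\ref{alg:step-a}, I would invoke Lemma~\ref{lem:smqi-nested}, which gives $S_0 \subseteq S_1 \subseteq \cdots \subseteq S_\i$, so that the total cost of intervals queried in Step~\ref{alg:step-a} during the first $\i$ iterations is simply $c(S_\i)$ rather than the telescoped sum one might initially fear. Then I would appeal to Definition~\ref{def:S}: in case~1, $S_\i$ is a prefix of $A_\i$ of cost at most $2\base^\i$, and in case~2, $S_\i$ is a prefix of $A_\i \cup B_\i$ of cost at most $\base^\i$. Either way, $c(S_\i) \le 2\base^\i$.

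For Step~\ref{alg:step-b}, the bicriteria guarantee from Theorem~\ref{thm:knapsack} gives $c(U_k) \le (1+\epsilon)\base^k$, and summing the geometric series yields
\[
\sum_{k=0}^{\i} c(U_k) \;\le\; (1+\epsilon)\sum_{k=0}^{\i} \base^k \;=\; (1+\epsilon)\cdot\frac{\base^{\i+1}-1}{\base-1} \;\le\; (1+\epsilon)\cdot\frac{\base}{\base-1}\,\base^\i.
\]
Combining both contributions gives
\[
c(\pi_\i) \;\le\; c(S_\i) + \sum_{k=0}^{\i} c(U_k) \;\le\; 2\base^\i + (1+\epsilon)\cdot\frac{\base}{\base-1}\,\base^\i \;\le\; (1+\epsilon)\left(2 + \frac{\base}{\base-1}\right)\base^\i,
\]
using $\epsilon \ge 0$ to absorb the $2\base^\i$ term into the $(1+\epsilon)$ factor.

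The main (mild) obstacle is that the Step~\ref{alg:step-a} analysis crucially depends on the nesting property $S_k \subseteq S_{k+1}$; without it, one would be forced to sum $c(S_k)$ geometrically, doubling the constant. Fortunately Lemma~\ref{lem:smqi-nested} has already been established, so this reduces to the bookkeeping above.
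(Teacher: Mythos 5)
Your proof is correct and follows essentially the same approach as the paper: it splits the cost into Step~\ref{alg:step-a} and Step~\ref{alg:step-b} contributions, invokes Lemma~\ref{lem:smqi-nested} to get the nesting $\cup_k S_k = S_\i$ and hence $c(S_\i)\le 2\base^\i$ from Definition~\ref{def:S}, and bounds the $U_k$ costs by the same geometric series as in Lemma~\ref{lem:grdup}. The only cosmetic difference is that you explicitly flag the absorption of the $2\base^\i$ term into the $(1+\epsilon)$ factor, which the paper leaves implicit.
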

\begin{proof}
We handle separately the costs of intervals queried in the (modified) Step \ref{alg:step-a}   and Step~\ref{alg:step-b} of Algorithm~\ref{pol:gengreedy}. By Lemma~\ref{lem:smqi-nested}, we have $\cup_{k=0}^\i \t_k =\t_\i$. So,  the total cost incurred in the modified Step~\ref{alg:step-a} of the first $\i$ iterations is $c(\t_\i)\le 2\cdot \base^\i$. The total cost due to Step~\ref{alg:step-b} is exactly as in Lemma~\ref{lem:grdup}, which is at most $ (1 + \epsilon) \cdot \dfrac{\base^{\i+1}}{\base - 1}$.   The lemma now follows. 
\end{proof}
Lemma~\ref{lem:optlb} continues to hold here as well; so:
\begin{equation}
    \label{eq:lem-opt-lb}
    \sum_{\i\ge 0} {\base^\i o_\i} \leq  \frac{\base }{\base-1} c_{exp}(\sigma) - \frac1{y-1}. 
\end{equation}

The key step is the analogue of Lemma~\ref{lem:aioi}, which we prove in the next subsection.
 \begin{lem} For all $\i \ge 0$, we have $v_\i \le o_\i$. 
 \label{lem:smqi-comp-prob}
 \end{lem}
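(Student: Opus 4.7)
The plan is to prove Lemma~\ref{lem:smqi-comp-prob} by combining the general-cost machinery of Lemma~\ref{lem:aioi} (thresholds $\theta_\i$, Proposition~\ref{prop-fixed-t} on the fixed-threshold adaptivity gap, and the knapsack guarantee on $U_\i$ from Theorem~\ref{thm:knapsack}) with the two-stopping-rule framework used in the proof of Theorem~\ref{thm:smqi-unit}. Fix $\i \ge 0$, let $\theta := \theta_\i$ be the threshold computed by the algorithm in Step~\ref{alg:step-theta} of iteration $\i$, and define events $\cA_1, \cA_2$ (that $\pi$ finishes by iteration $\i$ via \eqref{eq:smqi-rule-1}, resp.\ via \eqref{eq:smqi-rule-2}) and $\cO_1, \cO_2$ (analogously for $\sigma$ by cost $y^\i$). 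Since $1-v_\i = \Pr[\cA_1 \vee \cA_2]$ and $1-o_\i = \Pr[\cO_1 \vee \cO_2]$, the goal becomes $\Pr[\cA_1 \vee \cA_2] \ge \Pr[\cO_1 \vee \cO_2]$.

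For the old stopping rule, I would replay the chain of inequalities in the proof of Lemma~\ref{lem:aioi}, with $\t_\i$ substituted for $T_\i$. The one fact to verify is $L(\sigma_\i) \le L(\t_\i)$: if $\sigma_\i$'s smallest un-queried left-endpoint exceeded $L(\t_\i)$, then $\sigma_\i$ would have to query either the first $\i$-big interval $a_\i$ (in Case~2 of Definition~\ref{def:S}) or a prefix of $A_\i$ of cost more than $y^\i$ (in Case~1), neither of which is compatible with the budget $y^\i$. Combined with Proposition~\ref{prop-fixed-t} and the knapsack guarantee on $U_\i$, this yields $\Pr[\min_{j \in \pi_\i} X_j > \theta] \le \Pr[\min_{j \in \sigma_\i} X_j > \theta]$. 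Defining the ``not-stopped-via-rule-\eqref{eq:smqi-rule-1}'' events $\cG$ and $\cG^*$ for $\pi$ and $\sigma$ exactly as in the proof of Theorem~\ref{thm:smqi-unit}, this gives $\Pr[\cG] \le \Pr[\cG^*]$, together with $\neg \cG \subseteq \cA_1$ and $\cO_1 \subseteq \neg \cG^*$.

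The new stopping rule is handled via the structural set $R := \{i \in N : P_i \subseteq \t_\i\}$. I will show that whenever $\sigma$ stops via \eqref{eq:smqi-rule-2} within cost $y^\i$ returning some interval $i^*$, we have $i^* \in R$; this implies $\cO_2 \subseteq \cE := \bigvee_{i \in R}\bigwedge_{j \in P_i}\{X_j \ge r_i - \delta\}$. Since $\sigma$ must query all of $P_{i^*}$ within cost $y^\i$, every element of $P_{i^*}$ is $\i$-small, whence $P_{i^*} \subseteq A_\i \cup B_\i$. Checking $P_{i^*} \subseteq \t_\i$ splits into the two cases of Definition~\ref{def:S}: in Case~1 ($c(A_\i) > y^\i$) the cost bound rules out $P_{i^*}$ intersecting $B_\i$ (else $i^* = a_\i$ and $P_{i^*} \supseteq A_\i$, violating $c(P_{i^*}) \le y^\i < c(A_\i)$), so $P_{i^*} \cup \{i^*\}$ is a prefix of $A_\i$ of cost at most $2y^\i$ (using $c_{i^*} \le y^\i$), hence contained in $\t_\i$; in Case~2 ($c(A_\i) \le y^\i$), $P_{i^*}$ is a prefix of $A_\i \cup B_\i$ of cost at most $y^\i$, hence contained in the maximal such prefix $\t_\i$. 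Combined with Lemma~\ref{lem:smqi-nested}, which ensures that $\pi$ has queried all of $\t_\i$ (and thus all $P_i$ with $i \in R$) by iteration $\i$, this also yields $\cE \subseteq \cA_2$.

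The final step lifts stochastic dominance from the proof of Theorem~\ref{thm:smqi-unit}. Extend $\cG, \cG^*$ to $\cG_A, \cG^*_A$ by imposing the threshold condition on every r.v.\ in $N$; independence gives $\Pr[\cA_2 \mid \cG] = \Pr[\cA_2 \mid \cG_A]$ and $\Pr[\cO_2 \mid \cG^*] = \Pr[\cO_2 \mid \cG^*_A]$, and since the algorithm's threshold weakly dominates $\theta$, Lemma~\ref{lem:event-sd} applied to $\cE$ yields $\Pr[\cA_2 \mid \cG_A] \ge \Pr[\cE \mid \cG_A] \ge \Pr[\cE \mid \cG^*_A] \ge \Pr[\cO_2 \mid \cG^*_A]$. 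The algebraic wrap-up from the end of the proof of Theorem~\ref{thm:smqi-unit} then gives
\[
\Pr[\cA_1 \vee \cA_2] \ge 1 - (1 - \Pr[\cA_2 \mid \cG])\Pr[\cG] \ge 1 - (1 - \Pr[\cO_2 \mid \cG^*])\Pr[\cG^*] \ge \Pr[\cO_1 \vee \cO_2],
\]
i.e.\ $v_\i \le o_\i$. The main obstacle will be the structural case analysis establishing $P_{i^*} \subseteq \t_\i$; once that is in hand, the rest is essentially a merger of the unit-cost \smqi argument with Lemma~\ref{lem:aioi}.
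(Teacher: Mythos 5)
Your proposal mirrors the paper's proof: the same event decomposition $\cA_1,\cA_2,\cO_1,\cO_2$, the old-rule bound via the knapsack/fixed-threshold chain from Lemma~\ref{lem:aioi} together with the observation $L(\sigma_\i)\le\min_{j\in N\setminus\t_\i}\ell_j$ (the paper's \eqref{eq:left-dominance}), the new-rule bound via $P_{i^*}\subseteq\t_\i$ (the paper's Lemma~\ref{lem:smqi-stop-S}) plus stochastic dominance (Lemma~\ref{lem:event-sd}), and the inclusion--exclusion wrap-up. The approach and conclusions are correct.

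One imprecision worth flagging: your case analysis for $P_{i^*}\subseteq\t_\i$, organized by the two branches of Definition~\ref{def:S}, asserts prefix properties that are not literally true in every subcase. In Case~2 you claim ``$P_{i^*}$ is a prefix of $A_\i\cup B_\i$,'' which fails when $P_{i^*}\cup\{i^*\}$ is the prefix and $i^*$ lies strictly inside $A_\i$: then $P_{i^*}$ has an interior hole and is not a prefix of anything (though $P_{i^*}\subseteq A_\i\subseteq\t_\i$ still gives the conclusion trivially). In Case~1 you assert ``$P_{i^*}\cup\{i^*\}$ is a prefix of $A_\i$,'' which fails in the subcase where $P_{i^*}$ itself is the prefix and $i^*\notin A_\i$ (again the conclusion follows more directly from $P_{i^*}$ being a prefix of $A_\i$ with cost at most $\base^\i\le 2\base^\i$). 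The paper avoids this by splitting on whether $P_{i^*}$ or $P_{i^*}\cup\{i^*\}$ is the prefix, and within the latter on whether $i^*$ is $\i$-big. Finally, your appeal to Lemma~\ref{lem:smqi-nested} to obtain $\cE\subseteq\cA_2$ is unnecessary: Step~\ref{alg:step-a} of iteration $\i$ already guarantees $\pi_\i\supseteq\t_\i$; the nestedness is only needed for the cost bound in Lemma~\ref{lem:smqi-cost-i}.
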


We can now prove the main result. 
\begin{thm}
      There is a $( 4+ 2\sqrt{3} + \epsilon)$-approximation for \smqi  with general costs.  

\end{thm}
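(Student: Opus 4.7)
The plan is to mimic the wrap-up in Theorem~\ref{thm:gencost} closely, since the three key ingredients for \smqi under general costs are already in place: (i)~the per-iteration cost bound $c(\pi_\i) \le (1+\epsilon)\bigl(2 + \tfrac{y}{y-1}\bigr)y^\i$ from Lemma~\ref{lem:smqi-cost-i}; (ii)~the survival-probability comparison $v_\i \le o_\i$ from Lemma~\ref{lem:smqi-comp-prob}; and (iii)~the optimum lower bound $\sum_{\i\ge 0} y^\i o_\i \le \tfrac{y}{y-1}c_{exp}(\sigma) - \tfrac{1}{y-1}$ from \eqref{eq:lem-opt-lb}. Only the constant in the cost bound has worsened (from $1+\tfrac{y}{y-1}$ to $2+\tfrac{y}{y-1}$), which will slightly worsen the final ratio and shift the optimal choice of $y$.

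First I would write the expected cost as a telescoping sum over the iteration in which $\pi$ stops, then apply Lemma~\ref{lem:smqi-cost-i}:
\[
c_{exp}(\pi) \;=\; \sum_{\i \ge 1}(v_{\i-1}-v_\i)\,c(\pi_\i) \;\le\; (1+\epsilon)\left(2 + \frac{y}{y-1}\right)\sum_{\i \ge 1} y^\i (v_{\i-1}-v_\i).
\]
Next I would repeat the summation-by-parts and substitution used in~\eqref{eq:bylem}--\eqref{eq:bylem3}: using $v_0 \le 1$, then Lemma~\ref{lem:smqi-comp-prob}, and finally~\eqref{eq:lem-opt-lb},
\[
\sum_{\i \ge 1} y^\i (v_{\i-1}-v_\i) \;=\; v_0 + (y-1)\sum_{\i \ge 0} y^\i v_\i \;\le\; 1 + (y-1)\sum_{\i \ge 0} y^\i o_\i \;\le\; y\cdot c_{exp}(\sigma).
\]
Combining gives $c_{exp}(\pi) \le (1+\epsilon)\, y\bigl(2 + \tfrac{y}{y-1}\bigr)\, c_{exp}(\sigma)$.

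The only remaining work is to optimize over the base $y>1$. Let $g(y) := 2y + \tfrac{y^2}{y-1}$. A direct differentiation gives $g'(y) = 2 + \tfrac{y^2-2y}{(y-1)^2}$, and setting $g'(y)=0$ reduces to $3y^2 - 6y + 2 = 0$. The root in $(1,\infty)$ is $y^* = 1 + \tfrac{1}{\sqrt{3}}$; substituting yields $g(y^*) = 4 + 2\sqrt{3}$ after a short computation (using $(y^*-1) = 1/\sqrt{3}$ and $(y^*)^2 = \tfrac{4}{3}+\tfrac{2}{\sqrt{3}}$). Absorbing the $(1+\epsilon)$ factor into the additive $\epsilon$ term gives the claimed $(4+2\sqrt{3}+\epsilon)$-approximation. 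There is no real obstacle here: the substance of the analysis has been carried out in Lemmas~\ref{lem:smqi-cost-i} and~\ref{lem:smqi-comp-prob}, and this proof is a mechanical repetition of the argument in Theorem~\ref{thm:gencost} with an updated constant, followed by one-variable calculus to tune $y$.
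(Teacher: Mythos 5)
Your proof is correct and follows exactly the same route as the paper: it combines Lemma~\ref{lem:smqi-cost-i}, Lemma~\ref{lem:smqi-comp-prob}, and \eqref{eq:lem-opt-lb} via the same telescoping and summation-by-parts manipulation used in Theorem~\ref{thm:gencost}, then optimizes the base $y$. The only difference is that you spell out the one-variable calculus (root of $3y^2-6y+2=0$, $y^*=1+1/\sqrt{3}$, $g(y^*)=4+2\sqrt{3}$) which the paper leaves implicit.
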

\begin{proof} By Lemma~\ref{lem:smqi-cost-i}, we have   $c_{exp}(\pi)  \leq (1 + \epsilon)  \left(2+ \frac{\base}{\base-1}  \right) \sum_{\i \ge 1}{{\base}^\i(v_{\i-1} - v_\i)}$. Exactly as in the proof of Theorem~\ref{thm:gencost}, using \eqref{eq:lem-opt-lb}, we get
$\sum_{\i \ge 1} {{\base}^\i(v_{\i-1} - v_\i)} \le y\cdot c_{exp}(\sigma)$. Therefore, $c_{exp}(\pi)  \leq (1 + \epsilon)  y\cdot \left(2+ \frac{\base}{\base-1}   \right) \cdot  c_{exp}(\sigma)$. Now, optimizing for $y$, we obtain the stated approximation ratio.
\end{proof}

\subsection{Proof of Lemma~\ref{lem:smqi-comp-prob}}

Fix any iteration $\i$. As in the unit-cost \smqi proof, we define the following events.

$$ \cA_1 \,:\, \text{    our policy $\pi$   finishes within $\i$ iterations due to~\eqref{eq:smqi-rule-1}.}$$
$$ \cA_2 \,:\, \text{    our policy $\pi$   finishes within $\i$ iterations due to~\eqref{eq:smqi-rule-2}.}$$
$$ \cO_1 \,:\, \text{    optimal policy $\sigma$   finishes by  cost $y^\i$  due to~\eqref{eq:smqi-rule-1}.}$$
$$ \cO_2 \,:\, \text{    optimal policy $\sigma$   finishes  by  cost $y^\i$  due to~\eqref{eq:smqi-rule-2}.}$$

Clearly, $1-v_\i = \Pr[\cA_1\vee \cA_2]$ and $1-o_\i = \Pr[\cO_1\vee \cO_2]$. 

\paragraph{Handling the old stopping criterion.}
Let $L$ denote the smallest un-queried left-endpoint at the end of iteration $\i$ in $\pi$.  Note that $L$ is a deterministic value. Let  
$\cG$ be the event that $X_j > L+\delta$ for all intervals $j$  queried by $\pi_\i$.  Note that $\cG=\neg \cA_1$, i.e., criterion~\eqref{eq:smqi-rule-1} does {\em not} apply by the end of iteration $\i$. 
Recall that threshold $\theta_\i$ (Step~\ref{alg:step-theta} in Algorithm \ref{pol:gengreedy})  is $\delta$ more that the smallest un-queried left-endpoint in that step. Clearly,  $\theta_\i \le L+\delta$ in iteration $\i$.

Now consider the truncated optimal policy. Let $L(\sigma_\i)$ be its smallest un-queried left-endpoint; this is a random value as $\sigma_\i$ is an adaptive policy. We claim that 
\begin{equation}
    \label{eq:left-dominance}
    L+\delta \,\, \ge  \,\,\theta_\i \,\, \ge  \,\,\min_{j\in N\setminus \t_\i} \ell_j  +\delta  \,\,\ge  \,\,L(\sigma_\i) +\delta.
\end{equation}
Above,  the second inequality uses the fact that $\t_\i$ is queried before Step~\ref{alg:step-theta}.  To see the last  inequality, note that  $\sigma_\i$ cannot query any $\i$-big interval: so  $L(\sigma_\i)\le \ell_{a_\i}$.  We have two cases depending on the definition of $\t_\i$:
\begin{itemize}
    \item If $\t_\i\supseteq A_\i$ then clearly $\min_{j\in N\setminus \t_\i} \ell_j = \ell_{a_\i} \ge L(\sigma_\i)$. 
    \item If $\t_\i\subseteq A_\i$ then we must have $c(A_\i)>y^\i$, which means that $\t_\i$ contains the maximal prefix of cost at most $y^\i$. Again, this implies $\min_{j\in N\setminus \t_\i} \ell_j \ge  L(\sigma_\i)$.
\end{itemize}
This proves \eqref{eq:left-dominance}. 

Now, let $\cG^*$ be the that event that $X_j > \theta_\i$ for all intervals $j$ in   $\sigma_\i$. Using \eqref{eq:left-dominance} it follows that $\cO_1\subseteq \neg \cG^*$.  We now obtain:
\begin{equation}\label{eq:smqi-prob-g*}
\Pr[\cG^*] = \Pr\left[\min_{j \in \sigma_\i} X_j  > \theta_\i\right] \ge   \Pr\left[\min_{j \in \pi_\i} X_j  > \theta_\i\right]\ge   \Pr\left[\min_{j \in \pi_\i} X_j  > L+\delta \right] = \Pr[\cG].
\end{equation}
 The first inequality follows from the proof of Lemma~\ref{lem:aioi}: see \eqref{eq:opt-theta} - \eqref{eq:knap}. The second inequality above uses $\theta_\i\le L+\delta$.   

\paragraph{Handling the new stopping criterion.} Let $\cG_A$ be the event that $X_j>L+\delta$ for {\em all} r.v.s $j\in N$. Similarly, let $\cG^*_A$ be the event that $X_j>\theta_\i$ for {\em  all}  $j\in N$. Clearly,
$\Pr\left[\cA_2\, |\, \cG\right]  \,=\, \Pr\left[\cA_2\, |\,  \cG_A\right]$  and $ \Pr\left[\cO_2\, |\, \cG^*\right]  \,=\, \Pr\left[\cO_2\, |\,  \cG^*_A\right]$.  
  We will now prove that
 \begin{equation}\label{eq:smqi-prob-new-stop}
\Pr\left[\cA_2\, |\, \cG\right]  \,=\, \Pr\left[\cA_2\, |\, \cG_A\right] \,\ge\,  \Pr\left[\cO_2\, | \, \cG^*_A\right] \,=\, \Pr\left[\cO_2\, |\, \cG^*\right] .
\end{equation}
Using \eqref{eq:smqi-prob-g*}, exactly as in the proof of Theorem~\ref{thm:smqi-unit}, this  implies 
$\Pr[\cA_1 \vee \cA_2]  \ge  \Pr[\cO_1 \vee \cO_2]$. We repeat the argument below for completeness.
\begin{align*}
    \Pr[\cA_1 \vee \cA_2] &=     \Pr[\cA_1] +     \Pr[\cA_2 \wedge \neg \cA_1] = \Pr[\neg \cG] + \Pr[\cA_2 \wedge  \cG]\\
    & = \Pr[\neg \cG] + \Pr[\cA_2 |\cG]\cdot \Pr[\cG] = 1 - \left( 1 -  \Pr[\cA_2 |\cG]\right) \cdot \Pr[\cG]\\
    & \ge 1 - \left( 1 -  \Pr[\cO_2 |\cG^*]\right) \cdot \Pr[\cG^*]\qquad \text{ by \eqref{eq:smqi-prob-g*} and \eqref{eq:smqi-prob-new-stop}}\\
    &= \Pr[\neg \cG^*] + \Pr[\cO_2 \wedge  \cG^*]\\
    & \ge \Pr[\cO_1] +     \Pr[\cO_2 \wedge \neg \cO_1] \qquad \text{ using  }\cO_1\subseteq \neg \cG^* \\
    & =    \Pr[\cO_1 \vee \cO_2].
\end{align*}

This proves Lemma~\ref{lem:smqi-comp-prob}.

\paragraph{Proving \eqref{eq:smqi-prob-new-stop}.}  The key property here is the following.
\begin{lem}\label{lem:smqi-stop-S}
    If $\sigma$ finishes due to criterion~\eqref{eq:smqi-rule-2} and identifies $i^*$ by cost  $y^\i$  then $P_{i^*}\subseteq \t_\i$.
\end{lem}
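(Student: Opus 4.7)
The plan is to analyze the combinatorial structure of $P_{i^*}$ relative to the almost-prefix set $\t_\i$ from Definition~\ref{def:S}, via a case split on the position of $i^*$. First I would introduce the auxiliary prefix $Q := \{j\in N : \ell_j < r_{i^*}-\delta\}$ of $[n]$ (a prefix because intervals are sorted by left-endpoint); then $P_{i^*}$ equals either $Q$ (if $i^*\notin Q$) or $Q\setminus\{i^*\}$ (if $i^*\in Q$). Since $\sigma$ stops via rule~\eqref{eq:smqi-rule-2} identifying $i^*$, it must have queried every element of $P_{i^*}$, and since its total cost is at most $y^\i$, we obtain $c(P_{i^*})\le y^\i$. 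In particular $P_{i^*}$ contains no $\i$-big interval, as any single $\i$-big interval already has cost $>y^\i$.

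Next, I would dispatch the case $i^*=a_\i$ with $i^*\in Q$, which is precisely the scenario motivating case~2 of Definition~\ref{def:S}. Here $P_{i^*}=Q\setminus\{a_\i\}$, and the absence of $\i$-big intervals in $P_{i^*}$ forces $Q$ to stop strictly before $b_\i$, so $P_{i^*}\subseteq A_\i\cup B_\i$; in fact $P_{i^*}$ is itself a prefix of $A_\i\cup B_\i$ in the natural ordering that skips over $a_\i$, and it has cost at most $y^\i$. Since also $P_{i^*}\supseteq A_\i$, we have $c(A_\i)\le y^\i$, so $\t_\i$ is the maximal prefix of $A_\i\cup B_\i$ of cost at most $y^\i$; maximality then yields $P_{i^*}\subseteq \t_\i$.

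In the remaining case (either $i^*\ne a_\i$, or $i^*\notin Q$) I would first argue $P_{i^*}\subseteq A_\i$: if the largest index of $P_{i^*}$ were $\ge a_\i$, then since $i^*\ne a_\i$ in this case and $P_{i^*}\in\{Q,\,Q\setminus\{i^*\}\}$, we would have $a_\i\in P_{i^*}$, contradicting the absence of $\i$-big intervals. Now condition on $c(A_\i)$: if $c(A_\i)\le y^\i$, then $\t_\i\supseteq A_\i\supseteq P_{i^*}$ and we are done. Otherwise $\t_\i$ is the maximal prefix of $A_\i$ of cost at most $2y^\i$. If $i^*\notin Q$, then $P_{i^*}=Q$ is itself a prefix of $A_\i$ of cost $\le y^\i$, giving $P_{i^*}\subseteq \t_\i$ directly. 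If $i^*\in Q$, then $i^*\in A_\i$ (since $P_{i^*}\subseteq A_\i$ and $i^*\ne a_\i$), so $i^*$ is $\i$-small; the prefix $Q=P_{i^*}\cup\{i^*\}\subseteq A_\i$ then has cost $c(P_{i^*})+c_{i^*}\le y^\i+y^\i=2y^\i$, so $Q\subseteq \t_\i$ by maximality, and hence $P_{i^*}\subseteq \t_\i$.

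The main obstacle is that $P_{i^*}$ need not be a prefix of $[n]$---it may omit $i^*$ in the middle---whereas $\t_\i$ is always a prefix (of $A_\i$ or of $A_\i\cup B_\i$). Bridging this mismatch is exactly what the design of Definition~\ref{def:S} is engineered for: the factor-of-$2$ budget $2y^\i$ lets us ``fill in the hole'' at $i^*$ using that $i^*\in A_\i$ is $\i$-small, while the alternative of pairing $A_\i$ with $B_\i$ is tailored to $i^*=a_\i$, where $\sigma$'s certificate hinges on skipping the un-queried $\i$-big interval $a_\i$ itself.
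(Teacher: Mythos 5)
Your proof is correct and takes essentially the same approach as the paper's: both establish $c(P_{i^*})\le y^\i$, deduce $P_{i^*}$ contains no $\i$-big interval, and then case-split depending on whether $i^*$ is the first $\i$-big interval $a_\i$, using the ``budget $2y^\i$'' branch of Definition~\ref{def:S} to absorb the hole at a small $i^*$ and the $A_\i\cup B_\i$ branch to handle $i^*=a_\i$. One small wording slip: in your ``remaining case'' you justify $a_\i\in P_{i^*}$ by saying ``since $i^*\ne a_\i$ in this case,'' but that case also allows $i^*=a_\i$ with $i^*\notin Q$; there the conclusion still holds (since then $P_{i^*}=Q$, a prefix, so index $\ge a_\i$ would force $a_\i\in Q=P_{i^*}$), so the argument survives, but the stated reason should be ``we are not in the subcase $i^*=a_\i\wedge i^*\in Q$'' rather than ``$i^*\ne a_\i$.''
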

\begin{proof}
 Clearly $c(P_{i^*})\le y^\i$ as $\sigma$ finishes by cost $y^\i$.   Recall that $P_{i^*}=\{ j\in N\setminus i^* : \ell_j < r_{i^*} - \delta\}$ is an  almost-prefix set. We consider two cases:
    \begin{itemize}
        \item $P_{i^*}$ is itself a prefix. In this case, the first $\i$-big interval $a_\i$ must occur after $P_{i^*}$, i.e., $P_{i^*}\subseteq A_\i$. By the definition of $\t_\i$, it either contains $A_\i\supseteq P_{i^*}$ or $\t_\i$ is a maximal prefix of cost at most $2y^\i$ (which also contains $P_{i^*}$). 
        \item $P_{i^*}$ is not  a prefix, but $P_{i^*} \cup i^*$ is a prefix. 
        
        If $i^*$ is not $\i$-big then $P_{i^*} \cup i^*$ has cost at most $2 y^\i$ and is a subset of $A_\i$ (as $P_{i^*} \cup i^*$  cannot contain any $\i$-big interval). So $P_{i^*} \cup i^*$   is contained in the maximal prefix of $A_\i$ having cost at most $2y^\i$, which is always contained in $\t_\i$.
    
    If $i^*$ is $\i$-big then $i^*=a_\i$ the first $\i$-big interval (otherwise $P_{i^*}$ would contain some $\i$-big interval). This also means that   $c(A_\i)\le y^\i$: so $\t_\i$ is the maximal prefix in $[n]\setminus \{a_\i\}$ of cost at most $y^\i$. Clearly, we must then have $P_{i^*}\subseteq \t_\i$.
    \end{itemize}
\end{proof}

Let $R=\{h\in N: P_h \subseteq \t_\i\}$. By Lemma~\ref{lem:smqi-stop-S} it follows that if event $\cO_2$ occurs then  $i^*\in R$. Hence, $\cO_2$ 
is a subset of the event
$$\cE := \bigvee_{h\in R}  \left( \wedge_{j\in P_h} (X_j > r_h-\delta)\right) .$$

Moreover, our  policy $\pi_\i$ queries all the r.v.s in $\t_\i$. So, for all $h\in R$, the r.v.s in $P_h\subseteq \t_\i$ are queried by $\pi_\i$. Hence,  event $\cA_2$  
contains event $\cE$.

Recall that the event $\cG_A$ (resp. $\cG^*_A$) in policy $\pi$ (resp. $\sigma$) means that every  r.v. is more than $L+\delta$ (resp. $\theta$). Also, $\theta \le L+\delta$, which means 
$$\Pr[X_j > t | X_j >L+\delta] \ge \Pr[X_j > t | X_j >\theta], \quad \forall t\in \mathbb{R} , \forall j\in N. $$
In other words, for any $j\in N$, r.v. $X_j$ conditioned on $\cG_A$ stochastically dominates $X_j$ conditioned on $\cG^*_A$.  Using Lemma~\ref{lem:event-sd} (which deals with the same event $\cE$) with $Y_j = X_j | \cG_A$ and $Z_j = X_j | \cG^*_A$, we obtain $\Pr[\cE  | \cG_A] \ge \Pr[\cE | \cG^*_A]$, which proves \eqref{eq:smqi-prob-new-stop}.

\bibliographystyle{alpha}
\bibliography{ref}

\appendix
\section{Multiplicative Precision}\label{app:multiplicative}
Given an instance with non-negative r.v.s $\{X_i\}_{i=1}^n$ and multiplicative precision $\alpha \ge 1$, consider a new instance of \smq with r.v.s $\{X'_i := \ln(X_i)\}_{i=1}^n$ and additive precision $\delta:= \ln \alpha$. Note that 
$$\mv'=\min_{i=1}^n X'_i=\min_{i=1}^n \ln(X_i) = \ln\left(\min_{i=1}^n X_i\right)=\ln (\mv).$$
An  $\alpha$-approximately minimum value $W$ for the original instance satisfies $\mv\le W\le \alpha\cdot \mv$, where $\mv=\min_{i=1}^n X_i$. Then,  $\vv=\ln(W)$ satisfies
$\mv' = \ln (\mv) \le \vv \le \ln(\mv) + \ln \alpha = \mv'+\delta$, i.e., $\vv$ is a $\delta$-minimum value for the new instance. Similarly, if $\vv$ is a $\delta$-minimum value for the new instance then $W:=e^{\vv}$ is an $\alpha$-approximately minimum value  for the original instance.

\section{Bad Example for Competitive Ratio}\label{app:bad-cr}
We provide an example that rules out any reasonable {\em competitive ratio} bound for \smq and \smqi with precision $\delta>0$. This is in sharp contrast to the corresponding problem with exact precision ($\delta=0$)  for which a constant competitive ratio is known~\cite{kahan1991model}.  We note that results in the online setting assume open intervals, which in our setting (with discrete r.v.s)  corresponds to all left-endpoints being distinct.\footnote{Alternatively, our example can be modified into one with open intervals where the competitive ratio is still $\tilde\Omega(n)$.}  The benchmark in the online setting is the {\em hindsight optimum}, which is the minimum number (or cost) of queries that are needed to verify a $\delta$-minimum value  {\em conditioned} on the realizations $\{x_i\}_{i=1}^n$ of the r.v.s.  

Consider an instance with $n$ r.v.s with $\Pr[X_i=i]=p:=\frac{\ln n}n$ and $\Pr[X_i=n^2] = 1-p$ for all $i\in [n]$. All costs are unit and the precision $\delta=n$. We refer to the values $\{1,2,\dots, n\}$ as {\em low} values: note that any low value is a $\delta$-minimum value for this instance.  

We first consider the hindsight optimum. If any of the $n$
 r.v.s (say $k$) realizes to  a low value then verifying the $\delta$-minimum value just requires querying $k$, which has cost $1$. On the other hand, the probability that none of the  $n$ r.v.s   realizes to  a low value is $\left(1-p\right)^n\le e^{-pn}=\frac1n$: in this case the optimal verification cost is $n$ (querying all r.v.s). So the expected optimal cost is at most $2$.  

Now, consider any \smq policy: this does not know the realizations before querying. The only way to stop querying is (1) when some low value is observed, or (2) all $n$ r.v.s have been queried. 
The probability that the $i^{th}$ r.v. is queried is exactly $(1-p)^{i-1}$, which corresponds  no low realization among the previous $i-1$ r.v.s. So, the expected cost of any policy is:
$$\sum_{i=1}^n (1-p)^{i-1} = \sum_{i=0}^\infty (1-p)^i - \sum_{i=n}^\infty (1-p)^i = \frac1p -  \frac1p  (1-p)^n \ge \frac1p (1-e^{-pn}),$$
where the second equality  uses $\sum_{i=0}^\infty (1-p)^i=\frac1p$. Using $p=\frac{\ln n}{n}$, the expected cost is at least $\frac{n}{\ln n}(1-\frac1n)$.

Hence the competitive ratio for \smq is $\Omega(\frac{n}{\ln n})$.

\section{Adaptivity Gap for \smq }
\label{apx:na}

The instance has three intervals with $X_1 \in \{0, 3, \infty\} $, $X_2 \in \{1, \infty\} $, $X_3 \in \{2,  \infty\} $ and $\delta = 1$. Let  $\Pr(X_1 = 0) = \frac{1}{3}, \Pr(X_1 = 3) = \frac{1}{3}, \Pr(X_1 = \infty) = \frac{1}{3}, \Pr(X_2 = 1  ) = \epsilon,  \Pr(X_2 = \infty  ) = 1- \epsilon,  \Pr(X_3 = 2  ) = 1 - \epsilon, \Pr(X_3 = \infty) = \epsilon$. Recall that the  adaptive policy has cost at most $\frac{5+\epsilon}{3}$.

 We consider the cost of all possible non-adapative policies:
\begin{enumerate}
    \item 
  The cost of policy $\{1,2,3\}$ is,
  \begin{align*}
    NA & \geq 1 + \frac{2}{3} +  \frac{1 - \epsilon}{3} = \frac{6 -  \epsilon}{3} .
\end{align*}
    We query $X_1$ w.p. 1,  $X_2$  w.p. $2/3$ and  $X_3$  w.p. ${(1-\epsilon)}/{3}$.

\item 

The cost of policy $\{1,3,2\}$ is,
\begin{align*}
    NA & \geq 1 + \frac{2}{3} + \frac{2\epsilon}{3}  = \frac{5 + 2 \epsilon}{3} .
\end{align*}
    We query  $X_1$  w.p. 1,  $X_3$  w.p. $2/3$ and  $X_2$  w.p. ${2\epsilon}/{3}$.

\item 

The cost of policy $\{2,1,3\}$ is,
\begin{align*}
    NA & \geq 2 - \epsilon + \frac{1-\epsilon}{3} = \frac{7- 4\epsilon}{3}. 
\end{align*}

    We query  $X_2$   w.p. 1,  $X_1$  w.p. $1- \epsilon$ and  $X_3$  w.p. $(1 - \epsilon)/3$.

\item 

The cost of policy $\{2,3,1\}$ is,
\begin{align*}
    NA & \geq 2 - \epsilon + 1 - \epsilon = 3 - 2\epsilon .  
\end{align*}
 
    We query $X_2$ w.p. 1,  $X_3$    w.p. $1- \epsilon$ and  $X_1$ w.p. $1 - \epsilon$.

\item  The remaining non-adaptive policies  start with $3$. Any such policy costs at least $2$ because even if $X_3=2$ (its lowest value) we cannot stop.

\end{enumerate}

So, the optimal non-adaptive value is $\frac13 \cdot \min\left\{ 6-\epsilon , 5+2\epsilon , 7-4\epsilon, 9-6\epsilon, 6\right\}$. Setting   $\epsilon=\frac13$, the non-adaptive optimum is $\frac{17}9$, whereas the adaptive optimum is $\frac{16}{9}$.  

Hence, the  adaptivity gap for \smq is at least $\frac{17}{16}$.

\begin{rk}
        We note that if we allow $\Pr(X_2=1 ) = \epsilon_2  < \epsilon_3 = \Pr(X_3 = \infty) $ then we can achieve a ratio that is equal to $\frac{12}{11}$ as $\epsilon_2 \to 0$ and $\epsilon_3 = 0.5$.
       
\end{rk}

\section{Fixed Threshold Problem }\label{app:fixed-t}
Here, we prove Proposition~\ref{prop-fixed-t}. We proceed by induction on the budget $k$. For any set $S$ of r.v.s and budget $k$, let
\[V(S,k) := \max_{\mathcal{A} \subseteq S, c(\mathcal{A}) \leq k}\quad \Pr_{\mathcal{A},X}\left[\min_{j \in \mathcal{A}} X_j  \le \theta\right],\]
denote the maximum success probability over adaptive policies (having  cost at most  $k$). Similarly,   
\[F(S,k) = \max_{T \subseteq S: c(T)\le k}\quad  \Pr_X\left[ \min_{j \in T} X_j \le  \theta\right]\] 
be the maximum over non-adaptive policies. 
We will show that $V(S,k)=F(S,k)$, which would prove   Proposition~\ref{prop-fixed-t}. It suffices to show $V(S,k) \leq F(S,k)$. (Clearly,  $V(S,k) \geq F(S,k)$ as  adaptive policies capture all non-adaptive policies.)

The base case ($k=1$) is trivial because any policy is non-adaptive (it selects a single r.v.).


\begin{figure}[H]
    \centering
\begin{forest}
  [$X_a$
    [$w_1$ [$\mathcal{A}_{w_1}$, roof]
    ]
    [$w_2$
        [$\mathcal{A}_{w_2}$, roof]
    ]
    [... 
    ]
    [$w_i$ 
    ]
    [... 
    ]
  ]
\end{forest}
    \caption{Adaptive policy ${\cal A}$ for the fixed threshold problem.\label{fig:policy_tree}}
    \end{figure}
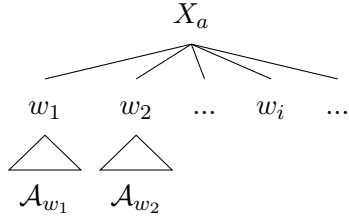

For the inductive step,  we fix some budget $\ell$ and want to show $V(S,\ell ) \leq F(S,\ell)$.  For any policy $\pi$, we will use $prob(\pi):=\Pr[\min_{i\in \pi} X_i\le \theta]$ to denote its success probability.  Let  $\mathcal{A}$
denote the optimal adaptive policy, which has $prob({\cal A}) = V(S,\ell )$. Let $a\in S$ denote the first query in policy  $\mathcal{A}$. Let $T^+$ (resp. $T^-$) represent all realizations of $X_a$ that are at most (resp. more than)  threshold $\theta$. For any realization $w$ of $X_a$, let $\mathcal{A}_w$ denote the rest of  policy ${\cal A}$ {\em conditioned} on $X_a=w$; note that the cost $c(\mathcal{A}_w)\le \ell - c_a$ because policy ${\cal A}$ always has cost at most $\ell$. See Figure~\ref{fig:policy_tree}.  Below, we use $p_a:=\Pr[X_a \le \theta]=\sum_{w\in T^+} \Pr[X_a=w]$; so $\sum_{w\in T^-} \Pr[X_a=w]=1-p_a$. We now have:
\begin{align} 
    V(S,\ell) & =  prob({\cal A}) =  p_a + \sum_{w \in T^-} \Pr[X_a = w] \cdot prob({\cal A}_w)  \notag\\
    & \le  p_a + \sum_{w \in T^-} \Pr[ X_a = w] \cdot V(S \setminus a, \ell - c_a) =  p_a + (1- p_a)\cdot V(S \setminus a, \ell - c_a)   \label{eq:Tplus}\\
     & \leq    p_a + (1 - p_a)\cdot F(S \setminus a, \ell - c_a)  \leq F(S, \ell) \label{eq:indstep}
     \end{align}
The inequality in \eqref{eq:Tplus} uses the fact that each ${\cal A}_w$ is a feasible adaptive policy for the smaller instance on r.v.s $S\setminus a$ and budget $\ell-c_a$. The first inequality in  \eqref{eq:indstep} is by induction. The second inequality in  \eqref{eq:indstep} is by the following observation. Let $T\subseteq S\setminus a$  be an optimal non-adaptive policy for the instance $F(S \setminus a, \ell - c_a)$; then $T\cup a$ is a feasible non-adaptive policy for the instance $F(S,\ell)$ with success probability $p_a + (1-p_a)\cdot prob(T)= p_a + (1 - p_a)\cdot F(S \setminus a, \ell - c_a) $.

\section{The  Knapsack  Subroutine \texorpdfstring{\eqref{eq:algknap}}{(algknap)}}
\label{app:knapsack}

We now provide a bi-criteria approximation algorithm for the knapsack instance \eqref{eq:algknap}. 
\begin{thm}
\label{thm:knapsack}
    Given discrete random variables  $\{X_i\}_{i=1}^n$ with costs $\{c_i\}_{i=1}^n$, budget $d$ and  threshold $\theta \in \mathbb{R}$, there is an $n^{O(1/\epsilon)}$  time algorithm that finds $T \subseteq N$ such that  $ \Pr\left[ \min_{j \in T} X_j   > \theta \right]\le  p^*$ and $c(T) \leq (1+\epsilon)d$, for any $\epsilon>0$. Here, 
    \[
      p^* = \min_{T \subseteq N } \left\{ \Pr\left[ \min_{j \in T} X_j  > \theta \right] : c(T) \leq d \right\}. \tag{*} \label{KP}
    \]
\end{thm}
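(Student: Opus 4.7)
The plan is to reduce the problem to classical maximum knapsack and then run a standard bicriteria PTAS. Since the $X_j$ are independent, setting $q_j:=\Pr[X_j>\theta]$ gives $\Pr[\min_{j\in T}X_j>\theta]=\prod_{j\in T}q_j$. Taking logarithms and defining $v_j:=-\ln q_j\ge 0$ turns the task into the maximum knapsack problem
\[ \max_{T\subseteq N}\Big\{\sum_{j\in T} v_j \,:\, c(T)\le d\Big\}. \]
Degenerate items are handled separately: if some $j$ has $q_j=0$ and $c_j\le d$, I would return $T=\{j\}$ directly, achieving $p^*=0$; any $j$ with $q_j=1$ has $v_j=0$ and can be deleted. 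We may therefore assume every $v_j$ is finite and positive.

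For the knapsack instance I would apply the classical ``guess-the-big-items, greedy-the-small-items'' bicriteria PTAS. Declare $j$ \emph{big} if $c_j>\epsilon d$ and \emph{small} otherwise, and set $K:=\lfloor 1/\epsilon\rfloor$. The algorithm enumerates every subset $S$ of big items with $|S|\le K$ and $c(S)\le d$. For each such $S$ it sorts the small items in decreasing order of density $v_j/c_j$, appends them to $S$ one at a time while the cumulative cost remains at most $d$, and also records the candidate obtained by additionally including the single small item that would have first exceeded the budget. Among all candidates with total cost at most $(1+\epsilon)d$, it returns one of maximum total value.

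For correctness, let $T^*$ be an optimal knapsack solution with value $V^*=-\ln p^*$, and set $S^*:=T^*\cap B$ where $B$ is the set of big items. Since each big item costs strictly more than $\epsilon d$, $|S^*|\le K$, and clearly $c(S^*)\le d$, so $S^*$ is one of the enumerated guesses. When $S=S^*$ is processed, the LP relaxation of the residual knapsack on small items with budget $d-c(S^*)$ has optimum at least $V^*-v(S^*)$, because $T^*\setminus S^*$ is an integral feasible solution. The density-ordered greedy attains this LP optimum with at most one fractional variable; rounding that variable up gives integer value at least $V^*-v(S^*)$ and exceeds the residual budget by at most the fractional item's own cost, which is $\le \epsilon d$ since it is small. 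The resulting candidate therefore has total value $\ge V^*$ (equivalently $\prod_{j\in T} q_j\le p^*$) and total cost $\le d+\epsilon d=(1+\epsilon)d$. The number of enumerated subsets is $\sum_{i=0}^{K}\binom{|B|}{i}=n^{O(1/\epsilon)}$, and each is processed in polynomial time, matching the stated running time.

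The main obstacle is the last step: ensuring that the ``overshoot'' from rounding the fractional LP variable is bounded by $\epsilon d$ rather than the (possibly huge) maximum item cost. Separating items by size and enumerating over the big ones up front is exactly what pins the overshoot down to one small item. The log-reduction to knapsack and the LP-greedy analysis are otherwise entirely standard.
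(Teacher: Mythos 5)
Your proposal is correct and matches the paper's proof in all essential respects: the same log transformation to an additive knapsack, the same split into big items (cost $> \epsilon d$) enumerated exhaustively and small items filled greedily by density, and the same overshoot argument bounding the excess cost by one small item's cost $\le \epsilon d$. The only cosmetic differences are that you explicitly handle the degenerate cases $q_j\in\{0,1\}$ and phrase the greedy guarantee via the fractional LP rather than citing it as a known fact about the overflowing greedy; both yield the same $n^{O(1/\epsilon)}$ running time and the $(1,1+\epsilon)$ bicriteria bound.
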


\begin{proof}

First, we re-write \eqref{KP} as a weighted knapsack problem, using  

    \[  \Pr\left[ \min_{j \in T}  X_j  > \theta \right] = \prod_{j \in T} \underbrace{\Pr[X_j > \theta ]}_{q_j} . \]
 Then, we take an inverse (which converts the min objective to max) and the logarithm (which makes  the objective linear). 

     \[ \log(\frac1{p^*}) =  \max_{T \subseteq N }\left\{\sum_{j \in T}{\log\left(\frac1{q_j} \right)} : c(T) \leq d \right\}. 
    \]

    Let $r_j:=\log(1/q_j)\ge 0$ be the ``reward'' of each item. Then the above problem is just the usual knapsack problem. We now provide a bicriteria approximation algorithm using standard enumeration techniques combined with a greedy algorithm. (We provide the full proof because we did not find a  reference to the precise  bi-criteria guarantee that is needed here.)

\paragraph{Algorithm: Bicriteria Knapsack}
    \begin{enumerate}
    \label{alg:knap}
        \item Order items greedily such that  $ \frac{r_1}{c_1} \ge \frac{r_2}{c_2} \ge ... \ge \frac{r_n}{c_n}$.
        \item Let $B = \{j\in [n]: c_j > \epsilon d \}$ be the set of ``large'' items.
        \item For each $S \subseteq B$ with $c(S) \leq d$:
        \begin{enumerate}
            \item Initialize solution $T_S\gets  S$.
            \item Add items from $[n]\setminus B$ to $T_S$ in the greedy order  until $c(T_S)$ exceeds $d$ for the first time.
        \end{enumerate}
        \item Return the best solution $T_S$ obtained above.
    \end{enumerate}

We first show that the runtime is $n^{O(1/\epsilon)}$. The key observation is that the number of distinct subsets $S\subseteq B$ considered in Step~3 is at most $n^{1/\epsilon}$: this is because each item in $B$ has cost more that $\epsilon d$.  

Let $T$ be  the solution found at the end of the algorithm. It is clear that  $d \le c(T) \le d+c_{max}$ where $c_{max}=\max_{i\in [n]\setminus B}  c_i$. Note that $c_{max}\le \epsilon d$ by definition of the large items $B$. So,  $c(T) \leq (1+\epsilon)d$. 

We now bound the total ``reward'' $\sum_{i\in T} r_i$. We will use the following well-known fact about the greedy algorithm for knapsack.
\begin{quote}
    Consider any knapsack instance with items $I$, rewards $\{r_i\}_{i\in I}$, costs $\{c_i\}_{i\in I}$ and budget $d'$. Let $G\subseteq I$ be the ``greedy'' solution obtained by including items in decreasing order of the ratio $\frac{r_i}{c_i}$ until $c(G)>d'$ for the first time. Then $r(G)\ge \max_{K\subseteq I: c(K)\le d'} \sum_{i\in K}r_i$. 
\end{quote}

Let $T^*$ be the optimal solution to \eqref{KP}. Then, $S=T^*\cap B$ will be one of the choices for $S$ considered in Step~3. Moreover, the set $T_S\setminus S$ added in Step~3b is precisely  the greedy solution for the knapsack instance with  items $I=[n]\setminus B$ and budget $d'=d-c(S)=d-c(T^*\cap B)$. The above fact implies that $r(T_S\setminus S)\ge r(T^*\setminus B)$ because $T^*\setminus B$ is a feasible solution to this knapsack instance.
It follows that $r(T_S)=r(S)+r(T_S\setminus S)\ge r(T^*)=\log(\frac1{p^*})$. 
\end{proof}

\end{document}